\newcommand\vldbpagestyle{plain}
\newcommand{\ignore}[1]{}
\newcommand{\nop}[1]{}
\newcommand{\eat}[1]{}
\newcommand{\kw}[1]{{\ensuremath{\mathsf{#1}}}\xspace}
\newcommand{\stitle}[1]{\vspace{1ex} \noindent{\bf #1}}
\long\def\comment#1{}
\newtheorem{definition}{Definition}
\newtheorem{example}{Example}
\newtheorem{theorem}{Theorem}
\newtheorem{lemma}{Lemma}
\newcommand{\titleEmph}[1]{\vspace{0.5ex} \noindent{\underline{\textit{#1}}}}
\newcommand{\fastPlex}{\emph{FP}\xspace}
\newcommand{\commuPlex}{\emph{CPlex}\xspace}
\newcommand{\dtwok}{\emph{D2K}\xspace}
\newcommand{\bran}{\kw{Branch}}
\newcommand{\FPBsc}{\emph{FP}{$\backslash$}\emph{ubs}\xspace}
\newcommand{\socEps}{Epinions}
\newcommand{\mrdata}[1]{\multirow{#1}{*}{\makecell[c]{Dataset\\($n$,$m$)}}}
\newcommand{\mrk}[1]{\multirow{#1}{*}{$k$}}
\newcommand{\mrq}[1]{\multirow{#1}{*}{$q$}}
\newcommand{\mrplex}[1]{\multirow{#1}{*}{\#$k$-plexes}}
\newcommand{\mcltime}[1]{\multicolumn{#1}{c|}{Running time (sec)}}
\newcommand{\mctime}[1]{\multicolumn{#1}{c}{Running time (sec)}}
\newcommand{\mrow}[2]{\multirow{#1}{*}{#2}}
\newcommand{\mrEml}[1]{\multirow{#1}{*}{\makecell[c]{EmEuAll\\(265214,\\420045)}}}
\newcommand{\mrSdot}[1]{\multirow{#1}{*}{\makecell[c]{Slashdot\\(82144,\\500480)}}}
\newcommand{\mrWiki}[1]{\multirow{#1}{*}{\makecell[c]{WikiVote\\(8298,\\100761)}}}
\newcommand{\mrEps}[1]{\multirow{#1}{*}{\makecell[c]{Epinions\\(75879,\\508837)}}}
\newcommand{\mrPoc}[1]{\multirow{#1}{*}{\makecell[c]{Pokec\\(1632803,\\30622564)}}}
\newcommand{\mrDBLP}[1]{\multirow{#1}{*}{\makecell[c]{DBLP\\(317080,\\1049866)}}}
\newcommand{\mrCai}[1]{\multirow{#1}{*}{\makecell[c]{Caida\\(26475,\\53381)}}}
\begin{document}
	\title{Scaling Up Maximal $k$-plex Enumeration
	}
	
	\author{Qiangqiang Dai}
	\affiliation{%
		\institution{Beijing Institute of Technology}
		\city{Beijing}
		\country{China}
	}
	\email{qiangd66@gmail.com}
	
	\author{Rong-Hua Li}
	\affiliation{%
		\institution{Beijing Institute of Technology}
		\city{Beijing}
		\country{China}
	}
	\email{lironghuabit@126.com}
	
	\author{Hongchao Qin}
	\affiliation{%
		\institution{Beijing Institute of Technology}
		\city{Beijing}
		\country{China}
	}
	\email{qhc.neu@gmail.com}
	
	\author{Meihao Liao}
	\affiliation{%
		\institution{Beijing Institute of Technology}
		\city{Beijing}
		\country{China}
	}
	\email{mhliao@bit.edu.cn}
	
	\author{Guoren Wang}
	\affiliation{%
		\institution{Beijing Institute of Technology}
		\city{Beijing}
		\country{China}
	}
	\email{wanggrbit@126.com}

	\begin{abstract}
		Finding all maximal $k$-plexes on networks is a fundamental research problem in graph analysis due to many important applications, such as community detection, biological graph analysis, and so on. A $k$-plex is a subgraph in which every vertex is adjacent to all but at most $k$ vertices within the subgraph. In this paper, we study the problem of enumerating all large maximal $k$-plexes of a graph and develop several new and efficient techniques to solve the problem. Specifically, we first propose several novel upper-bounding techniques to prune unnecessary computations during the enumeration procedure. We show that the proposed upper bounds can be computed in linear time. Then, we develop a new branch-and-bound algorithm with a carefully-designed pivot re-selection strategy to enumerate all $k$-plexes, which outputs all $k$-plexes in $O(n^2\gamma_k^n)$ time theoretically, where $n$ is the number of vertices of the graph and $\gamma_k$ is strictly smaller than 2. \color{black}{In addition, a parallel version of the proposed algorithm is further developed to scale up to process large real-world graphs. Finally, extensive experimental results show that the proposed sequential algorithm can achieve up to $2\times$ to $100\times$ speedup over the state-of-the-art sequential algorithms on most benchmark graphs. The results also demonstrate the high scalability of the proposed parallel algorithm. For example, on a large real-world graph with more than 200 million edges, our parallel algorithm can finish the computation within two minutes, while the state-of-the-art parallel algorithm cannot terminate within 24 hours.}
	\end{abstract}
	
	\maketitle
	
	\pagestyle{\vldbpagestyle}
	
	\section{Introduction} \label{sec:introduction}
	Graphs are ubiquitous in real-world applications. It is often of great value to find cohesive subgraphs from a graph in both theoretical research and practical applications. A classic model of cohesive subgraph called clique, where vertices are pairwise connected, has attracted much attention in recent decades. Many algorithms of enumerating all maximal cliques in a graph were developed, such as the classic Bron-Kerbosch algorithm \cite{cacmBronK73} and its pivot-based variants \cite{TomitaTT06worstcaseclique,13EppsteinLS,tcsNaude16}, the output-sensitive algorithms \cite{algorithmicaChangYQ13,icalpConteGMV16}, the parallel algorithms \cite{edbtConte16,SegundoPrallelClique18,tpdsWeiCT21}, and so on.
	
	Real-world networks are often noisy or faulty \cite{ConteMSGMV18,ZhouXGXJ20}. It may be overly restrictive to find cohesive subgraphs when using the notion of clique. Thus, a set of relaxed clique models has been proposed to overcome this issue \cite{cliquerelaxation13}. One of an interesting relaxed clique models is the $k$-plex which was first introduced in \cite{seidman1978graph}. In particular, given a graph $G$, a vertex set $C$ is a $k$-plex if every vertex has a degree at least $|C|-k$ within the subgraph of $G$ induced by $C$. A $k$-plex $C$ is said to be maximal if there does not exist any other $k$-plex that contains $C$. The problem of enumerating all maximal $k$-plexes from a graph has been widely studied in the literature \cite{07pakddWuP,BerlowitzCK15,WangCHSLPI17,ConteMSGMV18,ZhouXGXJ20}, which often arises in a large number of real-world applications, such as community detection in social networks \cite{11iorBalasundaramBH}, identifying protein complexes in protein-protein interaction networks \cite{bmcbiLuo09}, and serving as an alternative to cliques in biochemistry \cite{doyle2005robust}.
	
	However, the number of $k$-plexes in a real-world graph is often exponential to the size of the graph. It is well-known that the problem of enumerating all maximal $k$-plexes is NP-hard \cite{11iorBalasundaramBH}, resulting in that many existing algorithms for computing maximal $k$-plexes can only deal with very small graphs as reported in \cite{17kddConteFMPT,ConteMSGMV18}. Since small-size $k$-plexes are often no practical use in real-world applications, it is more desirable to compute all relatively-large maximal $k$-plexes with size no less than a given parameter $q$ (e.g., $q> 10$) \cite{17kddConteFMPT}. To enumerate all relatively-large maximal $k$-plexes, Conte et al.\ \cite{17kddConteFMPT} first developed an efficient algorithm using clique and $k$-core \cite{83kcoredef} to reduce the search space. Then, Conte et al.\  \cite{ConteMSGMV18} further proposed an improved algorithm with an effective pivoting technique which was originally used in many maximal clique enumeration algorithms \cite{cacmBronK73,13EppsteinLS}. However, the worst-case time complexity of these two algorithms is $O(n^2 2^n)$. More recently, a theoretically faster branch-and-bound algorithm was proposed by Zhou et al.\ \cite{ZhouXGXJ20}, which was the first algorithm with the worst-case time complexity lower than $O(n^2 2^n)$. {\color{black}Nevertheless, when setting $k\ge3$, all existing algorithms often require several hours to enumerate all relatively-large maximal $k$-plexes on middle-size real-world graphs. Moreover, to our knowledge, only the algorithm proposed in \cite{ConteMSGMV18} supports parallelism. However, as shown in our experiments, such a parallel algorithm is still very hard to handle large graphs even with 20 CPU cores. This motivates us to develop an efficient, parallel and scalable approach to enumerate all relatively-large maximal $k$-plexes on large real-world graphs.}
	%
	
	%
	%
	%
	%
	%
	%
	%
	
	\comment{The goal of this paper is to develop efficient solutions to improve the performance of enumerating all maximal $k$-plexes with size no less than $q$.}
	To achieve this goal, we first develop two novel upper-bounding techniques to reduce the search space, and then present a branch-and-bound enumeration algorithm based on a novel pivot re-selection technique. To further improve the scalability, we also devise a parallel version of our enumeration algorithm. In summary, we make the following contributions.
	
	\titleEmph{Novel upper bounds.} We propose two novel upper bounds for the $k$-plexes that contain a set $C$ of vertices. The key idea of our upper bounds is based on the principle that any possible maximal $k$-plex $C'$ containing $C$ can not break the rule that every vertex in $C$ must have at least $|C'|-k$ neighbors in $C'$. We show that with the help of our upper bounds, many unnecessary computations can be pruned in linear time.
	
	\titleEmph{Efficient algorithms.} We develop a new branch-and-bound algorithm to enumerate all maximal $k$-plexes with size no less than $q$. Specifically, in our algorithm, we first propose a novel pivot re-selection technique to reduce the search space, and then apply the proposed upper-bounding techniques to further prune unpromising branches. Theoretically, we prove that the worst-case time complexity of our algorithm is bounded by $O(n^2\gamma_k^n)$, where $\gamma_k$ is the branching factor of the algorithm with respect to $k$ which is strictly smaller than $2$. {\color{black}Finally, an efficient parallel algorithm is further designed to process large real-world graphs.}
	
	\titleEmph{Extensive experiments.} We conduct extensive experiments on real-world massive graphs to test the efficiency of our algorithm. The experiments show that our sequential algorithm can achieve up to $2\times$ to $100\times$ speedup over the state-of-the-art sequential algorithm on most benchmark graphs. {\color{black}The results also show that the speedup ratio of our parallel algorithm is almost linearly w.r.t.\ (with respect to) the number of threads, indicating the high scalability of the proposed parallel algorithm. For example, on the enwiki-2021 graph (with more than 300 million edges), when $k=2$ and $q=50$, our parallel algorithm only takes 71.2 seconds to enumerating all desired $k$-plexes using 20 threads, while the state-of-the-art parallel algorithm cannot terminate the computation within 24 hours under the same parameter settings.} \comment{For example, if $k=3$ and $q=20$, our algorithm can output all results within 27 seconds in Pokec (1,632,803 vertices and 22,301,964 edges), while the state-of-the-art algorithm takes around 821 seconds. 
	}For reproducibility purpose, the source code of this paper is released at \url{https://github.com/qq-dai/kplexEnum}.
	
	
	\section{Problem Statement} \label{sec:problemsatement}
	Given an undirected and unweighted graph $G=(V,E)$, where $V$ is the set of vertices and $E $ is the set of edges. Let $n = |V|$ and $m = |E|$ be the number of vertices and edges respectively. For each vertex $v$, the set of neighbors of $v$ in $G$, denoted by $N_v(G)$, is defined as $N_{v}(G) \triangleq \{u \in V|(v,u) \in E \}$. The degree of a vertex $v$ in $G$, denoted by $d_{v}(G)$, is the cardinality of $N_{v}(G)$, i.e., $d_{v}(G) = |N_{v}(G)|$. Further, we define the set of non-neighbors of $v \in V$ in $G$ as $\overline{N}_v(G)=V\setminus N_v(G)$ (Note that $v \in \overline{N}_v(G)$ for each $v \in V$). Similarly, the size of the set of non-neighbors of $v$ in $G$ is denoted by $\overline{d}_v(G)=|\overline{N}_v(G)|$. Let $G(S) = (S, E(S))$ be an induced subgraph of $G$ if $S \subseteq V$ and $E(S) = \{(u,v)|(u,v) \in E, u \in S, v \in S\} $. If the context is clear, we simply use $N_v(S)$ ($d_v(S)$) and $\overline{N}_v(S)$ ($\overline{d}_v(S)$) to denote $N_v(G(S))$ ($d_v(G(S))$) and $\overline{N}_v(G(S))$ ($\overline{d}_v(G(S))$), respectively. The definition of $k$-plex is given as follows.
	\begin{definition}[$k$-plex]
		\label{def:k-plex}
		Given an undirected graph $G$, a set $S$ of vertices is a $k$-plex if every vertex in the subgraph $G(S)$ induced by $S$ has a degree no less than $|S|-k$.
	\end{definition}
	A $k$-plex $S$ is said to be maximal if there is no other vertex set $S'\supset S$ in $G$ such that $S'$ is a $k$-plex. The problem of enumerating all maximal $k$-plexes has been widely studied in the literature \cite{07pakddWuP,BerlowitzCK15,WangCHSLPI17,ZhouXGXJ20}. Among them, \cite{ZhouXGXJ20} proposed the first algorithm with a nontrivial worst-case time guarantee based on a  branch-and-bound technique. This algorithm is also the state-of-the-art as far as we know. However, in real-world applications, when using $k$-plexes to detect communities, there may exist many maximal $k$-plexes with small size that are often no practical use \cite{ConteMSGMV18}. Therefore, it is more useful to enumerate the relatively-large maximal $k$-plexes for practical applications \cite{ConteMSGMV18}. Moreover, the relatively-large $k$-plexes are often much compact based on the following lemma.
	
	\begin{lemma}[\cite{seidman1978graph}]
		\label{lem:dimeter2}
		Given a  $k$-plex $S$ of $G$ with the size of $q$, the diameter of the subgraph induced by $S$ is at most $2$ if $q \ge 2k-1$.
	\end{lemma}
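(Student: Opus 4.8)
The plan is to show that any two vertices $u,v \in S$ are within distance $2$ in $G(S)$, assuming $|S| = q \ge 2k-1$. The key quantity to control is the number of common neighbors of $u$ and $v$ inside $S$. First I would handle the trivial case: if $(u,v) \in E$, then their distance is $1$ and there is nothing to prove. So assume $u$ and $v$ are non-adjacent in $G(S)$; I must then exhibit a common neighbor $w \in S$, which would give a path $u - w - v$ of length $2$.

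The core step is a counting argument using the $k$-plex degree constraint. By Definition~\ref{def:k-plex}, every vertex of $S$ has degree at least $|S|-k = q-k$ in $G(S)$, so each of $u$ and $v$ has at most $k-1$ non-neighbors inside $S$ other than itself. I would look at the vertex set $S \setminus \{u,v\}$, which has $q-2$ vertices. A vertex $w$ in this set fails to be a common neighbor of $u$ and $v$ only if it is a non-neighbor of $u$ or a non-neighbor of $v$. Since $u$ is non-adjacent to at most $k-1$ vertices of $S\setminus\{u\}$ and one of those is $v$ itself, $u$ misses at most $k-2$ vertices of $S\setminus\{u,v\}$; the same bound holds for $v$. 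Hence the number of vertices in $S\setminus\{u,v\}$ that are \emph{not} common neighbors is at most $(k-2)+(k-2) = 2k-4$.

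It then suffices to check that $S\setminus\{u,v\}$ is strictly larger than this bound, i.e.\ that $q-2 > 2k-4$, equivalently $q > 2k-2$, i.e.\ $q \ge 2k-1$, which is exactly the hypothesis. Under this inequality at least one vertex $w \in S\setminus\{u,v\}$ is adjacent to both $u$ and $v$, giving the desired path of length $2$ and establishing that the diameter is at most $2$.

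The main thing to be careful about — rather than a genuine obstacle — is the bookkeeping on the non-neighbor counts: one must remember that $\overline{N}_u(S)$ includes $u$ itself (per the paper's convention), that $v$ is counted among $u$'s non-neighbors in the non-adjacent case, and symmetrically for $v$, so that the per-vertex deficit on $S\setminus\{u,v\}$ is $k-2$ rather than $k-1$ or $k$. Getting these off-by-one adjustments right is what makes the threshold come out exactly at $q \ge 2k-1$ rather than a weaker bound.
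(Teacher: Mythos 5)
Your proof is correct and is the classical counting argument behind Seidman's result: for non-adjacent $u,v$, each misses at most $k-2$ vertices of $S\setminus\{u,v\}$, so $q-2 > 2k-4$ (i.e., $q \ge 2k-1$) forces a common neighbor, and your off-by-one bookkeeping (excluding $u$ itself and counting $v$ among $u$'s non-neighbors) is handled exactly right, which is what makes the threshold tight. This is the same mechanism the paper packages as the first bullet of Lemma~\ref{lem:two-vertices-bound}, namely $|N_u(S)\cap N_v(S)| \ge |S|-2k+2$ for non-adjacent pairs, of which the diameter bound is an immediate corollary when $|S| \ge 2k-1$.
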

	
	By Lemma~\ref{lem:dimeter2}, it is easy to see that any maximal $k$-plex with size no less than $2k-1$ must be compactly connected and has a small diameter. Therefore, similar to \cite{ConteMSGMV18}, we aims to enumerate all maximal $k$-plexes with size no less than a given parameter $q\ge 2k-1$. More formally, we define our problem as follows.
	
	\stitle{Size-constraint maximal $k$-plex enumeration.} Given an undirected graph $G$, a positive integer $k$, and a size constraint $q \ge 2k-1$, the goal of our problem is to list all maximal $k$-plexes in $G$ with size no less than $q$.
	
	
	
	\section{The Pruning Techniques} \label{sec:upper-bounds}
	Before developing the algorithm, some useful properties of $k$-plex can be applied to improve the performance of maximal $k$-plex enumerations. Below, we first describe two existing pruning techniques, and then we develop several novel techniques to prune branches in $k$-plex enumeration.
	
	\subsection{Existing Pruning Techniques}
	To reduce the search space, a widely used concept of $k$-core can be used to filter the vertices of $G$ that are definitely not in the $k$-plexes with size of no less than $q$. In general, a $k$-core of $G$ is a maximal subgraph in which every vertex has a degree no less than $k$ within the subgraph \cite{83kcoredef}. Based on the definition of $k$-core, we can easily derive the following lemma. Due to the space limit, all proofs of this paper are given in the supplementary document.
	\begin{lemma}
		\label{lem:k-core-bound}
		All maximal $k$-plexes of $G$ with size no less than $q$ must be contained in the $(q-k)$-core of $G$.
	\end{lemma}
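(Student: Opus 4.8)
Proof proposal for Lemma 3 (the $(q-k)$-core bound).

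The plan is to show that every vertex belonging to a maximal $k$-plex $S$ with $|S| \ge q$ satisfies the degree requirement of the $(q-k)$-core, and therefore $S$ is contained in that core. First I would fix an arbitrary maximal $k$-plex $S$ of $G$ with $|S| \ge q$ and an arbitrary vertex $v \in S$. By Definition~\ref{def:k-plex}, every vertex of $S$ has degree at least $|S| - k$ inside the induced subgraph $G(S)$, so in particular $d_v(S) \ge |S| - k$. Since $|S| \ge q$, this immediately gives $d_v(S) \ge q - k$. Because $S \subseteq V$ and $G(S)$ is a subgraph of $G$, the degree of $v$ in the full graph is at least its degree in $G(S)$, i.e.\ $d_v(G) \ge d_v(S) \ge q-k$; but more to the point, the induced subgraph $G(S)$ itself already witnesses that every one of its vertices has degree at least $q-k$ within $G(S)$.

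Next I would invoke the maximality of the $k$-core. Recall that the $(q-k)$-core of $G$ is the \emph{maximal} subgraph in which every vertex has degree at least $q-k$. The previous step shows that $G(S)$ is a subgraph of $G$ in which every vertex has degree at least $q-k$. Any such subgraph must be contained in the $(q-k)$-core: concretely, one can run the standard peeling procedure that repeatedly deletes vertices of degree less than $q-k$, and observe that no vertex of $S$ is ever eligible for deletion, since within $S$ each retains at least $q-k$ neighbors throughout. Hence no vertex of $S$ is removed during the construction of the core, and therefore $S$ is entirely contained in the $(q-k)$-core of $G$.

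Since the maximal $k$-plex $S$ was arbitrary, the conclusion holds for all maximal $k$-plexes of size at least $q$, which establishes the lemma. The argument is essentially a direct unwinding of the two definitions, so I do not anticipate a genuine obstacle; the only point requiring a little care is the containment step, where one must argue that a subgraph with uniformly high minimum degree survives the core-peeling process intact rather than merely noting the degree bound in isolation. Making that peeling argument precise—rather than just comparing degrees—is the part I would write out most carefully to avoid conflating ``every vertex of $S$ has high degree in $G(S)$'' with ``every vertex of $S$ has high degree in $G$.''
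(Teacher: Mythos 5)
Your proof is correct: every vertex $v$ of a $k$-plex $S$ with $|S|\ge q$ has $d_v(S)\ge |S|-k\ge q-k$ inside $G(S)$, and your peeling/maximality argument then places $S$ inside the $(q-k)$-core, which is exactly the standard argument this lemma rests on (the paper defers its proof to the supplementary document). Your closing caution—distinguishing high degree \emph{within} $G(S)$ from high degree in $G$, and arguing that $S$ survives the peeling process intact—is precisely the right point to make explicit.
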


	\vspace*{-0.1cm}
	To compute the $(q-k)$-core of $G$, we can make use of a peeling algorithm developed in \cite{03omalgkcore}, which runs in $O(m+n)$ time. To further reduce unnecessary vertices, a more interesting property of $k$-plex is also discovered recently which has been successfully used in \cite{ConteMSGMV18,ZhouXGXJ20,ZhouHXF21}.
	
	\vspace*{-0.1cm}
	\begin{lemma}[\cite{ConteMSGMV18}]
		\label{lem:two-vertices-bound}
		Given a $k$-plex $S$ of $G$, for each pair of vertices $u$ and $v$ in $S$, we have:
		\begin{itemize}
			\item if $(u,v) \notin E(S)$, $|N_u(S)\cap N_v(S)| \ge |S|-2k+2$;
			\item if $(u,v) \in E(S)$, $|N_u(S)\cap N_v(S)| \ge |S|-2k$.
		\end{itemize}
	\end{lemma}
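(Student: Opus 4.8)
The plan is to reduce both bounds to a single inclusion--exclusion identity combined with the defining degree condition of a $k$-plex. Write $s = |S|$. Since $S$ is a $k$-plex, Definition~\ref{def:k-plex} guarantees that every vertex $w \in S$ satisfies $|N_w(S)| = d_w(S) \ge s - k$; in particular this holds for both $u$ and $v$. The key identity I would invoke is
\begin{equation*}
|N_u(S) \cap N_v(S)| = |N_u(S)| + |N_v(S)| - |N_u(S) \cup N_v(S)|,
\end{equation*}
so that a lower bound on the intersection follows at once from the degree lower bounds on $u$ and $v$ together with an upper bound on the size of the union.

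The only quantity still needing control is $|N_u(S) \cup N_v(S)|$, and this is exactly where the two cases of the lemma diverge. Since there are no self-loops, $u \notin N_u(S)$ and $v \notin N_v(S)$ (consistent with the convention $v \in \overline{N}_v(G)$ stated in the preliminaries), and trivially $N_u(S) \cup N_v(S) \subseteq S$. When $(u,v) \notin E(S)$, we additionally have $v \notin N_u(S)$ and $u \notin N_v(S)$, so neither $u$ nor $v$ lies in the union; hence $N_u(S) \cup N_v(S) \subseteq S \setminus \{u,v\}$ and $|N_u(S) \cup N_v(S)| \le s - 2$. When $(u,v) \in E(S)$, by contrast, $v \in N_u(S)$ and $u \in N_v(S)$, so both endpoints do appear in the union and we can only assert $|N_u(S) \cup N_v(S)| \le s$.

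Substituting these two bounds into the identity yields the two claimed inequalities: in the non-adjacent case, $|N_u(S) \cap N_v(S)| \ge 2(s-k) - (s-2) = s - 2k + 2$, and in the adjacent case $|N_u(S) \cap N_v(S)| \ge 2(s-k) - s = s - 2k$. The argument is essentially a counting exercise, so I do not anticipate a serious obstacle; the single point demanding care is the bookkeeping of whether $u$ and $v$ themselves belong to the union, since this is precisely what separates the two bounds and hinges both on the self-loop convention and on the adjacency of $u$ and $v$.
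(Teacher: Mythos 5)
Your proof is correct. The paper does not prove this lemma inline (it cites it from \cite{ConteMSGMV18} and defers all proofs to a supplementary document), but your inclusion--exclusion argument is the standard one: it is just the De Morgan dual of counting non-neighbors (each of $u,v$ has at most $k$ non-neighbors in $S$ including itself, and in the non-adjacent case $u$ and $v$ each lie in both non-neighborhoods, saving the extra $2$), so your case bookkeeping of whether $u,v$ belong to $N_u(S)\cup N_v(S)$ is exactly the right and essentially canonical way to obtain both bounds.
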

	
	In the enumeration procedure, we let $S$ be the $k$-plex to be extended in $G$, and let $C$ be the set of candidate vertices of $G$ that can be added to $S$ to form a larger $k$-plex. Then, given the size constraint $q$, we can use  Lemma~\ref{lem:two-vertices-bound} to prune unpromising candidate vertices in $C$. Specifically, for each vertex $u \in C$, we iteratively check whether there is a vertex $v \in S$ such that the vertices $u$ and $v$ conflict with Lemma~\ref{lem:two-vertices-bound} in the subgraph $G(S \cup C)$. If such a vertex $v \in S$ is found, it implies that the vertex $u$ cannot be added to $S$ to form a large $k$-plex. As a consequence, the vertex $u$ can be safely removed from $C$ without losing any result. Such an operation can be repeated until all vertices in $C$ meet the conditions given in Lemma~\ref{lem:two-vertices-bound}.
	
	As shown in \cite{ZhouHXF21}, the time complexity of removing vertices in $G$ using Lemma~\ref{lem:two-vertices-bound} is $O(lm^{1.5})$ by using a triangle listing algorithm proposed in \cite{Latapy08}, where $l$ is the number of iterations. Clearly, such a pruning technique is often very expensive when using in the recursive enumeration procedure. In the following, we will develop several novel and efficient pruning techniques to cut the branches in the recursive enumeration procedure.
	
	
	\subsection{Novel Upper-bounding Techniques}
	Here we develop several novel techniques to derive an upper bound of the size of a $k$-plex that contains some given vertices. Below, we start by giving a basic upper bound for the vertices in the candidate set $C$ in the enumeration procedure.
	\begin{lemma}
		\label{lem:nbr-bound}
		Given a $k$-plex $S$ and the candidate set $C$, for each $v \in C$, the upper bound of the $k$-plexes containing $S\cup \{v\}$ is $|S|+k-\overline{d}_v(S) + d_v(C)$, where $\overline{d}_v(S) = |S \setminus N_v(G)|$.
	\end{lemma}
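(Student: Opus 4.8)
The plan is to take an arbitrary $k$-plex $P$ that contains $S\cup\{v\}$ and lies in the current search space, and to bound its cardinality directly. Since $C$ is the candidate set of vertices that may extend $S$, every such $P$ satisfies $S\cup\{v\}\subseteq P\subseteq S\cup C$. I would therefore write $P = S\cup P_C$ with $P_C = P\cap C$ and $v\in P_C$; because $S$ and $C$ are disjoint this splits $|P| = |S| + |P_C|$, so it suffices to prove $|P_C|\le k-\overline{d}_v(S)+d_v(C)$.

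First I would invoke the defining inequality of a $k$-plex at the vertex $v$ inside $P$: by Definition~\ref{def:k-plex}, $d_v(P)\ge |P|-k$, equivalently $\overline{d}_v(P) = |P|-d_v(P)\le k$. In words, $v$ has at most $k$ vertices of $P$ (counting $v$ itself, under the paper's convention $v\in\overline{N}_v(G)$) that fail to be its neighbors. Next I would localize these non-neighbors. Since $S$ and $P_C\subseteq C$ partition $P$ and all non-neighbor counts are taken against $N_v(G)$, the non-neighbors of $v$ in $P$ split additively as $\overline{d}_v(P) = \overline{d}_v(S) + |P_C\setminus N_v(G)|$. Combined with $\overline{d}_v(P)\le k$, this gives $|P_C\setminus N_v(G)|\le k-\overline{d}_v(S)$.

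For the complementary part, every neighbor of $v$ inside $P_C$ is in particular a neighbor of $v$ inside $C$, so $|P_C\cap N_v(G)|\le d_v(C)$. Adding the two pieces yields $|P_C| = |P_C\cap N_v(G)| + |P_C\setminus N_v(G)| \le d_v(C) + \bigl(k-\overline{d}_v(S)\bigr)$, and hence $|P|\le |S|+k-\overline{d}_v(S)+d_v(C)$, which is exactly the claimed bound.

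The argument is short, so I do not expect a serious obstacle; the only point requiring care is the bookkeeping of $v$ itself. Under the paper's convention $v$ is its own non-neighbor, so it contributes once to $|P_C\setminus N_v(G)|$ and once to $\overline{d}_v(P)$, and the $k$-plex bound $\overline{d}_v(P)\le k$ already absorbs this single self-contribution; nothing needs a correction term. I expect the main (if minor) subtlety to be stating the search-space assumption $P\subseteq S\cup C$ cleanly and confirming that the disjointness of $S$ and $C$ makes the non-neighbor counts additive. Once that is pinned down, the chain of inequalities follows immediately.
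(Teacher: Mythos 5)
Your proof is correct: the decomposition $|P| = |S| + |P_C|$ with $P_C = P\cap C$, the bound $\overline{d}_v(P)\le k$ from Definition~\ref{def:k-plex}, and the additive split of $v$'s non-neighbors between $S$ and $P_C$ (with $v$'s self-contribution correctly absorbed by the convention $v\in\overline{N}_v(G)$) yield exactly the claimed bound. The paper defers its own proof of this lemma to supplementary material, but your direct counting argument is the natural one and coincides with the counting that underlies the paper's printed proofs of the tighter bounds in Lemma~\ref{lem:needed-bound} and Lemma~\ref{lem:vertex-support-condictions}, which refine precisely the $d_v(C)$ term you bound trivially by all of $v$'s neighbors in $C$.
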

	Clearly, by Lemma~\ref{lem:nbr-bound}, we can prune the vertex $v$ if the upper bound of the $k$-plexes containing $S\cup\{v\}$ is smaller than the given parameter $q$. However, such an upper bound is often very loose, as it is dependent on the degree $d_v(C)$. To improve this, we next develop two novel upper bounds of the vertices in the candidate set. The first solution is as follows. Let $N_v(C)=\{v_1, \cdots, v_d\}$ be the neighbors of $v$ in $C$ sorted in non-decreasing order of the size of $\overline{N}_{v_i}(S)$ where $v_i \in N_v(C)$, i.e., $|\overline{N}_{v_i}(S)| \le |\overline{N}_{v_{i+1}}(S)|$ for each $i\in [1, d)$. Let $T_v^i(C)$ be the first $i$ vertices in $N_v(C)$. Denote by $sup(S)=\sum_{v\in S}(k-|\overline{N}_v(S)|)$ the support number of non-neighbors of the $k$-plex $S$. Then, a tighter upper bound can be obtained according to the following lemma.
	\comment{
		\begin{figure}[t]
			\centering
			\includegraphics[width=0.4\linewidth]{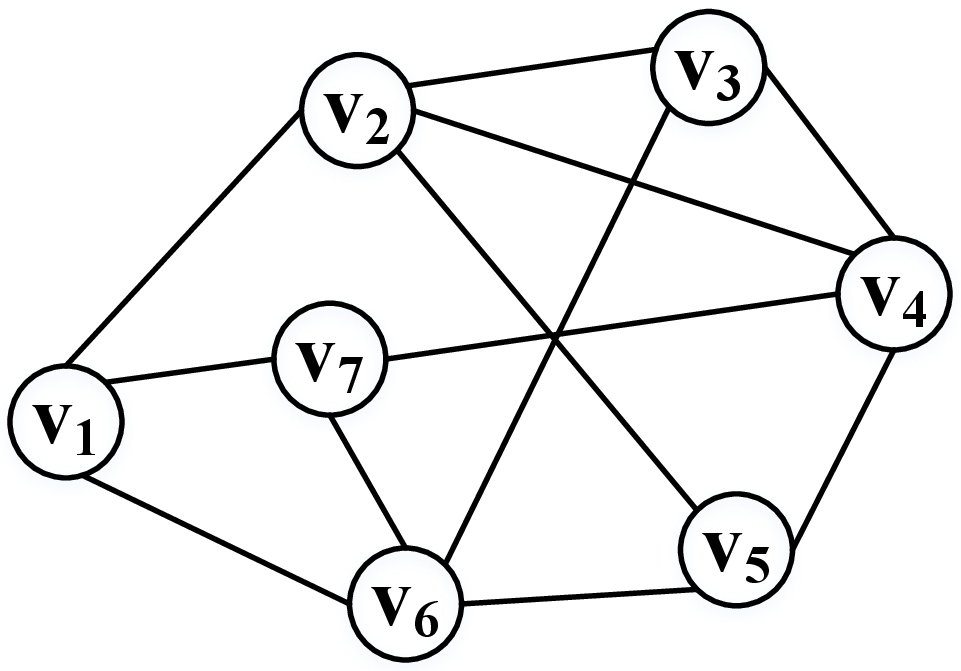}
			\vspace*{-0.2cm}
			\caption{A running example.}
			\vspace*{-0.2cm}
			\label{fig:example}
		\end{figure}
	}
	\begin{lemma}
		\label{lem:needed-bound}
		Given a $k$-plex $S$ and the candidate set $C$, let $\omega_k(S,C)$ be the maximum size of the $k$-plexes containing $S$ in $G(S \cup C)$. We have $\omega_k(S\cup\{v\},C) \le |S|+ k-\overline{d}_v(S) + max\{i|\sum_{u\in T_v^i(C)}|\overline{N}_u(S)| \le sup(S)\}$, where $v \in C$.
	\end{lemma}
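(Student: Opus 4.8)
The plan is to start from an arbitrary maximum $k$-plex $P$ that witnesses $\omega_k(S\cup\{v\},C)$ and decompose it. Since every such $P$ satisfies $S\cup\{v\}\subseteq P\subseteq S\cup C$ and $S,C$ are disjoint in the enumeration, I can write $P=S\cup A$ with $A\subseteq C$ and $v\in A$, so that $|P|=|S|+|A|$. I then split $A$ into the neighbors of $v$, $A_n := A\cap N_v(C)$, and the non-neighbors of $v$, $A_{\bar n}:=A\setminus N_v(G)$ (which contains $v$ itself). The goal reduces to bounding $|A_{\bar n}|$ and $|A_n|$ separately.

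For the non-neighbor part, I would apply the $k$-plex condition of Definition~\ref{def:k-plex} to the vertex $v$ inside $P$: the number of non-neighbors of $v$ in $P$ is at most $k$. Splitting these non-neighbors across $S$ and $A$, and using $S\cap A=\emptyset$, gives $\overline{d}_v(S)+|A_{\bar n}|\le k$, hence $|A_{\bar n}|\le k-\overline{d}_v(S)$. This recovers the leading terms $|S|+k-\overline{d}_v(S)$ of the claimed bound.

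The heart of the argument is bounding $|A_n|$, and this is where I expect the main work. The key observation is a double-counting identity on the ``support'' consumed by $A$. Applying the $k$-plex condition to each $w\in S$ gives $|\overline{N}_w(S)|+|\overline{N}_w(A)|\le k$, i.e.\ $|\overline{N}_w(A)|\le k-|\overline{N}_w(S)|$; summing over $w\in S$ yields $\sum_{w\in S}|\overline{N}_w(A)|\le sup(S)$. Since $S$ and $A$ are disjoint and the graph is undirected, both sides count exactly the same set of non-adjacent pairs of $S\times A$, so $\sum_{u\in A}|\overline{N}_u(S)|=\sum_{w\in S}|\overline{N}_w(A)|\le sup(S)$. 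Dropping the non-negative contribution of $A_{\bar n}$ leaves $\sum_{u\in A_n}|\overline{N}_u(S)|\le sup(S)$.

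Finally I would convert this weighted-sum constraint into a cardinality bound by an exchange (greedy) argument. Because $A_n$ is a subset of $N_v(C)$ whose vertices carry total weight $|\overline{N}_u(S)|$ at most $sup(S)$, and because $T_v^{|A_n|}(C)$ consists of the $|A_n|$ neighbors of $v$ with the \emph{smallest} such weights, we have $\sum_{u\in T_v^{|A_n|}(C)}|\overline{N}_u(S)|\le\sum_{u\in A_n}|\overline{N}_u(S)|\le sup(S)$. Hence $|A_n|$ is a feasible index in $\{i:\sum_{u\in T_v^i(C)}|\overline{N}_u(S)|\le sup(S)\}$, so $|A_n|\le\max\{i:\cdots\}$. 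Combining with $|A_{\bar n}|\le k-\overline{d}_v(S)$ gives $|P|=|S|+|A_n|+|A_{\bar n}|\le |S|+k-\overline{d}_v(S)+\max\{i:\sum_{u\in T_v^i(C)}|\overline{N}_u(S)|\le sup(S)\}$, exactly the stated bound. The only subtleties to verify carefully are the bookkeeping of $v$ itself (it lies in $A_{\bar n}$ and occupies one of the $k$ non-neighbor slots of $v$) and the disjointness $S\cap C=\emptyset$ that makes the double counting an exact identity rather than an inequality.
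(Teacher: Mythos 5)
Your proof is correct and follows essentially the same route as the paper's: both arguments split a (hypothetical) large $k$-plex containing $S\cup\{v\}$ into the non-neighbor part of $v$ (at most $k-\overline{d}_v(S)$ vertices, including $v$ itself) and the neighbors of $v$ in $C$, and both cap the latter by the maximal feasible prefix index over the sorted list $T_v^i(C)$. If anything, your write-up is the more complete one: the paper's contradiction argument simply asserts that ``there are at most $r$ neighbors of $v$ in $C$ that can be added to $S$,'' whereas your double-counting identity $\sum_{u\in A}|\overline{N}_u(S)|=\sum_{w\in S}|\overline{N}_w(A)|\le sup(S)$ combined with the exchange argument on $T_v^{|A_n|}(C)$ supplies exactly the justification that assertion needs.
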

	
	\begin{proof}
		Let $|S|+ k-\overline{d}_v(S)+r$ be the upper bound of the $k$-plexes containing $S\cup \{v\}$ with respect to Lemma~\ref{lem:needed-bound}. On the contrary, we assume that there is a $k$-plex $S'$ with the size of $|S|+ k-\overline{d}_v(S)+r'$ containing $S\cup \{v\}$, where $r' > r$. Thus, it is easily obtained that $S'$ contains at least $r'$ neighbors of $v$ in $C$. However, there are at most $r$ neighbors of $v$ in $C$ that can be added to $S$ with the restriction of the definition of $k$-plex. Thus, such assumption is contradiction, and the lemma is established.
	\end{proof}
	\comment{
		With respect to Lemma~\ref{lem:needed-bound}, we develop a novel pruning procedure described in Algorithm~\ref{alg:needed-nbr-prunning}. Specifically, the algorithm first makes use of an array $cnt$ to count the number of vertices in $N_v(C)$ that has a specific number of non-neighbors in the current $k$-plex $S$, i.e., $cnt(i)$ is the number of vertices in $N_v(C)$ who has $i$ non-neighbors in $S$ (lines~1-2). Then, the algorithm computes the support number $s=sup(S)$ of non-neighbors of the $k$-plex $S$ (line~3). Next, the algorithm traverses the array $cnt$ from the position $0$ to $k-1$. If the current $k$-plex can tolerate $cnt(i)$ vertices to push into $S$ with the constraint of $s$, the upper bound can be increased by $cnt(i)$. Otherwise, only $\lfloor s / i \rfloor$ vertices have a chance to push into $S$ (lines~5-9). After that, the algorithm can correctly return the upper bound of $k$-plexes containing $S\cup\{v\}$.
		
		It can be easily observed that the time complexity of Algorithm~\ref{alg:needed-nbr-prunning} is bounded by $O(k+|C|)\approx O(|C|)$ as $k$ is a small constant value.
		
		\begin{algorithm} [t]
			\small
			\caption{The support upper bound}
			\label{alg:needed-nbr-prunning}
			\KwIn{a $k$-plex $S$, the candidate set $C$, and $v \in C$.}
			\KwOut{The upper bound of $k$-plexes containing $S\cup \{v\}$.}
			\lFor{$i = 0$ to $k-1$}{$cnt(i) = 0$}
			\lFor{$ u \in N_v(C)$}{
				$cnt(\overline{d}_u(S))++$	
			}
			$s \gets \sum_{u\in S}(k-\overline{d}_u(S))$\;
			$ub \gets |S|+k-\overline{d}_v(S)$\;
			\For{$i = 0$ to $k-1$}{
				\If{$s - i*cnt(i) < 0$}{
					$ub \gets ub + \lfloor s / i \rfloor$; {\bf break}\;	
				}
				$ub \gets ub + cnt(i)$; $s \gets s - i*cnt(i)$\;
			}
			{\bf return $ub$}\;
		\end{algorithm}
	}
	\comment{
		It is easy to see that the upper bound given by Lemma~\ref{lem:needed-bound} is tighter than that given by Lemma~\ref{lem:nbr-bound}, and the following example further illustrates this point.
		\begin{example}
			Given an undirected graph $G$ shown in Fig.~\ref{fig:example} and the parameter $k=2$, let $S=\{v_1\}$ and $C=\{v_2,v_3,...,v_7\}$ be the current $k$-plex and the candidate set in a recursion, respectively. In this example, we assume that the vertex $v_2$ is used to expand $S$. By Lemma~\ref{lem:nbr-bound}, we can easily obtain that $\omega_k(S\cup\{v_2\},C)$ is bounded by $6$. However, when using Lemma~\ref{lem:needed-bound}, we have $\omega_k(S\cup\{v_2\},C) \le 5$. Because for the vertices in $N_{v_2}(C)=\{v_3,v_4,v_5\}$, at most two vertices can be added to $S\cup\{v_2\}$ to form a larger $k$-plex, which results a tighter bound on Lemma~\ref{lem:needed-bound} compared to Lemma~\ref{lem:nbr-bound}.
		\end{example}
	}
	
	Although Lemma~\ref{lem:needed-bound} can efficiently prune many unnecessary branches in the recursive enumeration procedure, we observe that it only considers the overall acceptable non-neighbors of vertices in the $k$-plex $S$. However, when applying Lemma~\ref{lem:needed-bound}, there may exist some vertices in $S$ that break the limit of only at most $k$ non-neighbors. For example, given a $k$-plex $S$, a vertex $v \in C$, and a neighbor set $N_v(C)=\{u_1,u_2,u_3\}$ of $v$, assume that $\overline{N}_{u_1}(S)=\overline{N}_{u_2}(S)=\overline{N}_{u_3}(S)=\{v'\}$ and $sup(S)$ equals to 3 w.r.t. $k=2$. According to Lemma~\ref{lem:needed-bound}, we cannot decrease the upper bound of the $k$-plexes containing $S\cup\{v\}$, while only one vertex in $N_v(C)$ may be added to $S\cup\{v\}$. Based on this analysis, we further develop an improved upper bound for each vertex in $C$. Let $sup(v, S)$ be the support number of non-neighbors of $v$ in $S$, i.e., $sup(v,S)=k-\overline{d}_v(S)$ where $v \in S$. Then, we have the following lemma.
	
	\begin{algorithm} [t]
		\small
		\caption{Compute the upper bound}
		\label{alg:hybrid-prunning}
		\KwIn{a $k$-plex $S$, the candidate set $C$, and a vertex $v \in C$.}
		\KwOut{The upper bound of $k$-plexes containing $S\cup \{v\}$.}
		\lFor{$i = 0$ to $k-1$}{$ B[i]= \emptyset$}
		\lForEach{$ u \in N_v(C)$}{
			$B[\overline{d}_u(S)].push(u)$	
		}
		\lForEach{$u\in S$}{$sup(u,S)=k-\overline{d}_u(S)$}
		$s \gets \sum_{u\in S}(k-\overline{d}_u(S))$\;
		$ub \gets |S|+k-\overline{d}_v(S)$\;
		\For{$i = 0$ to $k-1$}{
			\For{$u\in B[i]$ s.t. $s \ge i$}{
				$w \gets \arg \min\{sup(w,S)| w\in \overline{N}_u(S)\}$\;
				\If{$sup(w,S) > 0$}{
					$ub \gets ub + 1$;
					$s \gets s - i$\;
					$sup(w,S) \gets sup(w,S)-1$\;
				}
			}
		}
		{\bf return $ub$}\;
	\end{algorithm}
	
	\begin{lemma}
		\label{lem:vertex-support-condictions}
		Given a $k$-plex $S$, the candidate set $C$, and $sup(w, S)$ for each $w\in S$, for each $u$ in $N_v(C)$, where $v\in C$, we conduct the following operations: (1) get the vertex $w$ from $\overline{N}_u(S)$ whose support number of non-neighbors in $S$ is minimum, i.e., $sup(w,S)\le sup(w',S)$ for each $w'\in\overline{N}_u(S)$; (2) if $sup(w,S) > 0$, we decrease $sup(w,S)$ by 1; (3) otherwise, we remove $u$ from $N_v(C)$. Let $D$ be the remaining vertices in $N_v(C)$, then we have $\omega_k(S\cup\{v\},C) \le |S|+ k-\overline{d}_v(S) + |D|$.
	\end{lemma}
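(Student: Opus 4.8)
The plan is to follow the same high-level strategy as the proofs of Lemmas~\ref{lem:nbr-bound} and~\ref{lem:needed-bound}: take any $k$-plex $S'$ with $S\cup\{v\}\subseteq S'\subseteq S\cup C$ that attains $\omega_k(S\cup\{v\},C)$, and bound its size. First I would split the added vertices $A:=S'\setminus S$ (which contains $v$) into the neighbors $A^{+}:=A\cap N_v(C)$ and the remaining non-neighbors $A^{-}:=A\setminus(\{v\}\cup A^{+})$. Applying the $k$-plex condition to $v$ inside $S'$ gives $\overline{d}_v(S')=\overline{d}_v(S)+1+|A^{-}|\le k$, hence $|A^{-}|\le k-\overline{d}_v(S)-1$ and therefore $|S'|\le |S|+k-\overline{d}_v(S)+|A^{+}|$. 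This reduces the whole lemma to proving the single inequality $|A^{+}|\le|D|$.

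Next I would read off the constraints that $A^{+}$ must obey. For every $w\in S$ the $k$-plex condition applied to $w$ in $S'$ forces $\overline{d}_w(S')\le k$, and since the non-neighbors of $w$ contributed by $A$ are exactly the $u\in A$ with $w\in\overline{N}_u(S)$, we obtain $|\{u\in A^{+}:w\in\overline{N}_u(S)\}|\le k-\overline{d}_w(S)=sup(w,S)$. Thus $A^{+}$ is a feasible selection: every $w\in S$ is a non-neighbor of at most $sup(w,S)$ vertices of $A^{+}$. It then remains to show that the greedy procedure of the lemma keeps at least as many candidates as any feasible selection can contain.

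The heart of the argument is to compare the greedy output $D$ with this feasible $A^{+}$ through the residual support values. Let $\rho(w)$ denote the final support of $w$ after the procedure and let $Z:=\{w\in S:\rho(w)=0\}$ be the set of exhausted vertices; by construction each $w\in Z$ was chosen as the minimum-support witness exactly $sup(w,S)$ times, so the set $D_Z$ of kept candidates whose witness lies in $Z$ satisfies $|D_Z|=\sum_{w\in Z}sup(w,S)$. I would first establish a retention claim: if a candidate $u$ has no non-neighbor in $Z$ (in particular if $\overline{d}_u(S)=0$, when $u$ demands no support), then $u\in D$, because residual supports only decrease, so the minimum support over $\overline{N}_u(S)$ was positive when $u$ was processed. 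Writing $P$ for the candidates with $\overline{N}_u(S)\cap Z=\emptyset$, this gives $P\subseteq D$, and since $P\subseteq D$ we have $|D|=|P|+|D\setminus P|$ with $D_Z\subseteq D\setminus P$. For the part $A^{+}\setminus P$, every such candidate has at least one non-neighbor in $Z$, so summing the feasibility constraints over $w\in Z$ yields $|A^{+}\setminus P|\le\sum_{w\in Z}|\{u\in A^{+}:w\in\overline{N}_u(S)\}|\le\sum_{w\in Z}sup(w,S)=|D_Z|\le|D\setminus P|$. Adding the two parts gives $|A^{+}|\le|P|+|A^{+}\setminus P|\le|P|+|D\setminus P|=|D|$, which is exactly what is needed.

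The step I expect to be the main obstacle is precisely this last comparison. A naive global count---charging each candidate of $A^{+}$ against its total number of non-neighbors---only reproduces the weaker support bound of Lemma~\ref{lem:needed-bound}, because the greedy consumes a single unit of support per kept vertex whereas an actual extension consumes one unit per non-neighbor. The improvement therefore hinges on localizing the count to the exhausted set $Z$ together with the retention claim, which in turn relies on the crucial detail that the witness $w$ is chosen to have \emph{minimum} residual support: this is what guarantees that any candidate avoiding $Z$ is never discarded, so that all of the ``loss'' is confined to candidates touching $Z$ and can be paid for by $D_Z$. Notably, this reasoning does not depend on the order in which the candidates of $N_v(C)$ are processed, since it uses only the monotonicity of the residual supports.
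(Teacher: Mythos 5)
Your proof is correct, and it is in fact more rigorous than the paper's own argument, though both share the same skeleton: compare the neighbors of $v$ that any hypothetical extension uses against the greedy output $D$. The paper argues by contradiction: it posits $D'\subseteq N_v(C)$ with $|D'|>|D|$ addable to $S\cup\{v\}$, sets $A=D'\setminus D$ and $B=D\setminus D'$, observes that every $u\in A$ must have an exhausted witness $w\in\overline{N}_u(S)$ with $sup(w,S)\le 0$ that is also a non-neighbor of some $u'\in B$, and then asserts that this forces $A=\emptyset$. That last assertion is precisely the step the paper leaves unjustified: witness-sharing alone gives a (not necessarily injective) map from $A$ into $B$, and several vertices of $A$ could share the same $u'\in B$, so neither $A=\emptyset$ nor even $|A|\le|B|$ follows without a counting argument over the exhausted supports. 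Your charging argument supplies exactly that missing count: the identity $|D_Z|=\sum_{w\in Z}sup(w,S)$ for the exhausted set $Z$, the retention claim $P\subseteq D$ (the one place where the \emph{minimum}-residual-support choice of witness is genuinely needed), and the localized bound $|A^{+}\setminus P|\le\sum_{w\in Z}sup(w,S)\le|D\setminus P|$ together give $|A^{+}|\le|D|$ directly, with no contradiction needed. You also make explicit the reduction $|S'|\le|S|+k-\overline{d}_v(S)+|A^{+}|$ via the split into $A^{+}$ and $A^{-}$, which the paper compresses into ``we can easily observe,'' and your remark that the bound holds for any processing order of $N_v(C)$ is a genuine strengthening that the paper's exchange-style sketch does not provide.
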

	\begin{proof}
		Suppose, on the contrary, that there is a $k$-plex $S'$ containing $S\cup\{v\}$ with $|S'| > |S|+ k-\overline{d}_v(S) + |D|$. Then, we can easily observe that there must be a set $D' \subseteq N_v(C)$ ($|D'| > |D|$) that can be added to $S\cup\{v\}$ to form a larger $k$-plex. Assume that $A=D' \setminus D$ and $B= D\setminus D'$. Then, we have that for each $u\in A$ there must exist a vertex $w \in \overline{N}_u(S)$ satisfying $sup(w,S) \le 0$ and $ w \in \overline{N}_{u'}(S)$ where $u'\in B$. Otherwise, the vertex $u$ can be added to $D$ according to Lemma~\ref{lem:vertex-support-condictions}. This indicates that the set $A$ must be the empty set, which is contradiction to the hypothesis. Thus, the lemma is established.
	\end{proof}
	
	\comment{
		The following example further clarifies the idea of Lemma~\ref{lem:vertex-support-condictions}.
		
		\begin{example}
			Given an undirected graph $G$ shown in Fig.~\ref{fig:example} and the parameter $k=3$, let $S=\{v_1,v_7\}$ and $C=\{v_2,v_3,...,v_6\}$ be the current $k$-plex and the candidate set of $S$ in a recursion, respectively. Suppose that the vertex $v_2$ will be added to $S$ in current recursion. The neighbor set  $\{v_3, v_4, v_5\}$ of $v_2$ in $C$ will be checked sequentially by Lemma~\ref{lem:vertex-support-condictions}. Before that, we first compute the support number of non-neighbors of each vertex in $S$ (i.e., $sup(v_1,S)=2, sup(v_7,S)=2$). Then, when checking the first vertex $v_3$, we only decrease $sup(v_1,S)$ by $1$, and for the second vertex $v_4$, $sup(v_1,S)$ also needs to be decreased, which leads to $sup(v_1,S)=0$. However, when checking the third vertex $v_5$, it would be removed from the neighbor set. Finally, only two vertices are left. Thus, we have $\omega_k(S\cup\{v_2\},C) \le 6$. Here if we make use of Lemma~\ref{lem:needed-bound} to generate the upper bound of $\omega_k(S\cup\{v_2\},C)$, we can obtain that $\omega_k(S\cup\{v_2\},C) \le 7$. This example demonstrates that Lemma~\ref{lem:vertex-support-condictions} works better than Lemma~\ref{lem:needed-bound}.
		\end{example}
	}
	We observe that a tighter upper bound can be further derived by combining Lemma~\ref{lem:needed-bound} with Lemma~\ref{lem:vertex-support-condictions}. The key idea is that when checking the $i$-th vertex in $N_v(C)$ whether it can be added to $S$ using Lemma~\ref{lem:needed-bound}, we also make use of Lemma~\ref{lem:vertex-support-condictions} to detect the vertex whether there is a conflict with the vertices in $\overline{N}_v(S)$. Based on this idea, we propose an upper-bounding algorithm which is shown in Algorithm~\ref{alg:hybrid-prunning}.
	
	In Algorithm~\ref{alg:hybrid-prunning}, it first makes use of a bucketing array $B$ to store the vertices in $N_v(C)$, where $R[i]$ contains all vertices in $N_v(C)$ that have $i$ non-neighbors in $S$ (lines~1-2). Then, the algorithm computes the support number of non-neighbors of each vertex in $S$ (line~3). After that, the algorithm sequentially checks each vertex in $B$ whether it can be added to the current $k$-plex $S$ based on Lemma~\ref{lem:needed-bound} and Lemma~\ref{lem:vertex-support-condictions} (lines~4-11). Specifically, when checking the vertex $v$ from $B[i]$ (line~7), the algorithm first selects $w$ from $\overline{N}_v(S)$ with the minimum support number of non-neighbors (line~8). If there is no conflict between $w$ and $u$ with respect to the definition of $k$-plex, the algorithm updates the corresponding values (lines~9-11). Otherwise, the vertex $u$ will not be considered. The algorithm continues the computations until all vertices have been examined. It is easy to derive that the time complexity of Algorithm~\ref{alg:hybrid-prunning} is bounded by $O(|S|+\text{log}k|C|) \approx O(|S|+|C|)$, as $k$ is often a very small constant for most real-world applications (e.g., $k\le 10$).

	\comment{	
		\begin{lemma}
			\label{lem:correctness-of-impalg}
			Algorithm~\ref{alg:hybrid-prunning} correctly returns the upper bound of the $k$-plexes containing $S \cup \{v\}$.
		\end{lemma}
		\begin{proof}
			This lemma is clearly established with respected to Lemma~\ref{lem:two-vertices-bound} and Lemma~\ref{lem:vertex-support-condictions}.
		\end{proof}
	}
	
	\section{A New Branch-and-Bound Algorithm} \label{sec:algorithms}
	
	\begin{algorithm} [t]
		\small
		\caption{The branch-and-bound algorithm.}
		\label{alg:enumeration}
		\SetKwProg{Fn}{Procedure:}{}{}
		\KwIn{The graph $G$ and two parameters $k$ and $q \ge 2k-1$.}
		\KwOut{All maximal $k$-plexes in $G$ with size no less than $q$.}
		
		Let $G'=(V',E')$ be the $(q-k)$-core subgraph of $G$; and let $\{v_1, ..., v_{n'}\}$ be the degeneracy ordering of $V'$\;
		\For{$i=1$ to $n'-q+1$}{
			$C \gets \{v_{i+1}, ..., v_{n'}\} \cap N_{v_i}^2(G')$\;
			$X \gets \{v_{1}, ..., v_{i-1}\} \cap N_{v_i}^2(G')$\;
			Pruning vertices in $C$ and $X$ using Lemma~\ref{lem:two-vertices-bound}\;
			$\bran(\{v_i\}, C, X, k, q)$\;
		}
		\Fn{$\bran(S, C, X, k, q)$}{
			\If{$C=\emptyset$}{
				\lIf{$|S|\ge q$ and $X=\emptyset$}{Output $S$}
				{\bf return}\;
			}
			Let $D$ be the subset of $S \cup C$ whose degree is minimum in $G(S\cup C)$\;
			Pick a pivot vertex $v$ from $D$ with the maximum $\overline{d}_v(S)$\;
			\If{$d_v(S\cup C) \ge |S\cup C| -k$}{
				\If{$S\cup C$ is a maximal $k$-plex in $G$}{
					\lIf{$|S\cup C|\ge q$}{Output $S\cup C$}
				}
				{\bf return}\;
			}
			\If{$v\in S$}{
				Re-pick a pivot vertex $u$ from $\overline{N}_v(C)$ using the same rules as used in lines~11-12; then set $v \gets u$\;
			}
			Compute the upper bound $ub$ for the pivot vertex $v$ by invoking Algorithm~\ref{alg:hybrid-prunning}\;
			\If{$ub \ge q$}{
				Get $C' \subset C$ and $X'\subseteq X$ such that $S\cup \{v,w\}$ is a $k$-plex for each $w \in C' \cup X'$\;
				$\bran(S \cup \{v\}, C', X', k, q)$\;	
			}
			$\bran(S, C\setminus\{v\}, X \cup \{v\}, k, q)$\;
		}		
	\end{algorithm}
	
	In this section, we present the detailed framework of our branch-and-bound enumeration algorithm. Let $S$ and $C$ be the current $k$-plex and the candidate set in a recursion, respectively. Our idea is that we first select a pivot vertex $v$ from $S\cup C$ according to the following rules: (i) the selected pivot vertex has the minimum degree in $G(S\cup C)$; and (ii) if there are multiple vertices that has the minimum degree in $G(S\cup C)$, we select the vertex among them that has maximum $\overline{d}_v(S)$. Then, we check whether the selected pivot vertex is contained in $S$. If so, we will re-choose a pivot vertex from the set $\overline{N}_v(C)$ using the same rules. Next, we make use of the (re-chosen) pivot vertex to conduct the branch-and-bound procedure. The detailed implementation of this algorithm is shown in Algorithm~\ref{alg:enumeration}.
	
	
	In Algorithm~\ref{alg:enumeration}, it first computes the $(q-k)$-core $G'$ of $G$, since all maximal $k$-plexes with size no less than $q$ are contained in $G'$ (Lemma~\ref{lem:dimeter2}). Then, the algorithm sorts the vertices in $V'$ with the degeneracy ordering $\{v_1, ..., v_{n'}\}$ (i.e., the vertex-removing ordering in the peeling algorithm for $k$-core decomposition \cite{03omalgkcore}) (line~1). Following the degeneracy ordering, the algorithm sequentially processes the vertices to enumerate all maximal $k$-plexes by invoking the procedure $\bran$, which admits five parameters: $S$, $C$, $X$, $k$ and $q$ (lines~2-6), where $S$, $C$, and $X$ are the three disjoint sets, respectively. In particular, $S$ is the current $k$-plex, $C$ is the candidate vertex set in which the vertex can be used to expand $S$, and $X$ is the set of excluded vertices that have already been explored in the previous recursions. Initially, $C$ ($X$) is set as the set of $2$-hop neighbors of $v_i$ (the set of vertices whose distance from $v_i$ is no larger than $2$, denoted by $N_{v_i}^2(G')$) that come after (before) $v_i$ (line~3-4), where $v_i$ is the $i$-th vertex in the degeneracy ordering. This is because the diameter of any desired maximal $k$-plex is no larger than 2 by Lemma~\ref{lem:dimeter2}. Then, the algorithm applies the results in Lemma~\ref{lem:two-vertices-bound} to prune vertices in $C$ and $X$ (line~5). After that, the algorithm invokes the recursive procedure $\bran$ to enumerate all maximal $k$-plexes containing $v_i$.
	
	In $\bran$, if $C \cup X$ is an empty set and the size of $S$ is no less than $q$, it outputs $S$ as a result (lines~8-10). Otherwise, it executes the branch-and-bound procedure (lines~11-24). In particular, the algorithm first selects a vertex $v$ with the maximum $\overline{d}_v(S)$ from the subset of $S\cup C$ that has minimum degree in $G(S\cup C)$ as a pivot vertex (lines~11-12). If such a pivot vertex is already in $S$ (i.e., $v \in S$), the algorithm re-chooses a pivot vertex from $\overline{N}_v(C)$ using the same rules (lines~17-18). Then, the algorithm computes the upper bound of $k$-plexes containing $S\cup \{v\}$ by using Algorithm~\ref{alg:hybrid-prunning} (line~19). If such an upper bound is no less than $q$, the algorithm recursively invokes the $\bran$ procedure by expanding $S$ with $v$ and updating $C$ ($X$) such that for each $w \in C'$ ($w \in X'$), $S\cup\{v,w\}$ is a $k$-plex (lines~20-22). After that, the algorithm performs the other recursive branch by moving $v$ from $C$ to $X$ (line~23). Finally, the algorithm terminates until the candidate set is empty or the minimum degree of $G(S\cup C)$ is no less than $|S \cup C|-k$ (line~8 or line~13).
	
	\subsection{Time Complexity Analysis} \label{subsec:time-complexity}
	The time complexity of Algorithm~\ref{alg:enumeration} is analyzed in Theorem~\ref{the:time-complexity}.
	\begin{theorem}
		\label{the:time-complexity}
		Given a graph $G$ and two parameters $k$ and $q$, Algorithm~\ref{alg:enumeration} returns all maximal $k$-plexes with  size no less than $q$ in time $O(n^2\gamma_k^n)$, where $\gamma_k < 2$ is the maximum positive real root of $x^{k+2}-2x^{k+1}+1=0$ (e.g. $\gamma_1 = 1.618$, $\gamma_2=1.839$, and $\gamma_3=1.928$).
	\end{theorem}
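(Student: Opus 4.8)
The plan is to establish two claims: that the procedure \bran\ reports exactly the maximal $k$-plexes of size at least $q$, and that the total work is $O(n^2\gamma_k^n)$. The time bound is the substantive part, and I would obtain it in the standard branch-and-bound fashion---bound the number of nodes of the recursion tree by $O(\gamma_k^n)$ and multiply by the polynomial cost incurred at each node---so the whole argument reduces to pinning down the branching vector of the recursion.

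For correctness I would first note that confining the search to the $(q-k)$-core is sound by Lemma~\ref{lem:k-core-bound}, and that initialising $C$ and $X$ to $N_{v_i}^2(G')$ (lines~3--4) loses nothing, since by Lemma~\ref{lem:dimeter2} every $k$-plex of size $\ge q\ge 2k-1$ has diameter at most $2$ and hence lies inside the $2$-hop neighbourhood of any of its vertices. The two recursive calls in lines~22--23 split every extension of $S$ according to whether it contains the pivot $v$, so the branching is exhaustive, while the excluded set $X$ ensures that each maximal $k$-plex is emitted once and only when it is truly maximal (via the test $X=\emptyset$ in line~9). The pruning in line~5 (Lemma~\ref{lem:two-vertices-bound}) and the bound computed by Algorithm~\ref{alg:hybrid-prunning} in line~19 (sound by Lemmas~\ref{lem:needed-bound} and~\ref{lem:vertex-support-condictions}) discard only vertices and branches that cannot participate in a $k$-plex of size $\ge q$, so nothing is lost.

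The heart of the proof is the branching analysis. Taking the size of the candidate set as the measure, I would show that every branching node admits a branching vector dominated by $(1,2,\dots,k+1)$. Branching happens only when the test in line~13 fails, i.e.\ the minimum-degree vertex $v$ of $G(S\cup C)$ has $\overline{d}_v(S\cup C)\ge k+1$; otherwise $S\cup C$ is already a $k$-plex and the call returns. The extremal case is a pivot $v\in C$ with no non-neighbour in $S$ and at least $k$ non-neighbours $u_1,\dots,u_k$ inside $C$. Since a $k$-plex containing $v$ can hold at most $k-1$ further non-neighbours of $v$, at least one of $u_1,\dots,u_k$ must be dropped whenever $v$ is kept; branching on the first dropped index $i$ (keep $v,u_1,\dots,u_{i-1}$, discard $u_i$) removes $i+1$ vertices from $C$ for $i=1,\dots,k$, and the complementary branch that discards $v$ removes one vertex. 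This yields exactly the branching vector $(1,2,\dots,k+1)$, whose recurrence $T(\mu)\le\sum_{i=1}^{k+1}T(\mu-i)$ has characteristic equation $x^{k+1}=x^{k}+\cdots+x+1$; multiplying by $x-1$ gives $x^{k+2}-2x^{k+1}+1=0$, whose maximum positive root is $\gamma_k<2$, so the recursion tree has $O(\gamma_k^n)$ nodes.

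The delicate point---and the step I expect to be the main obstacle---is that Algorithm~\ref{alg:enumeration} only branches two ways, so I must show that the conceptual $(k+1)$-ary branch above is realised across $\le k+1$ consecutive levels of the real recursion. The enabling observations are that moving $v$ from $C$ to $S$ leaves $G(S\cup C)$ and hence all degrees unchanged, so $v$ stays a minimum-degree vertex and the re-selection in lines~17--18 forces the next pivots to be taken from $\overline{N}_v(C)$; and that once $v$ has exhausted its $k$ slots, the bound of Algorithm~\ref{alg:hybrid-prunning} prunes every remaining non-neighbour of $v$, which is precisely the forced discard in the last branch. Making this amortisation rigorous---while checking the auxiliary cases (a pivot originally in $S$, which I expect to give the strictly smaller factor associated with $(1,\dots,k)$, and pivots that already have non-neighbours in $S$, which only shorten the admissible inclusion chain) so that $(1,2,\dots,k+1)$ is genuinely worst---is where the work lies. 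Granting it, the bound follows: Algorithm~\ref{alg:hybrid-prunning} costs $O(|S|+|C|)=O(n)$ and forming $C',X'$ together with the line~5 pruning costs $O(n)$ per node when degree and support counters are maintained incrementally; multiplying the $O(\gamma_k^n)$ nodes by $O(n)$ work and the $O(n)$ iterations of the outer loop gives $O(n^2\gamma_k^n)$.
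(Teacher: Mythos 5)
Your proposal follows essentially the same route as the paper's own proof: the paper likewise measures progress by $|C|$, unrolls the include/exclude chain on the pivot's non-neighbours (its Eqs.~(\ref{eqt:basic-recurrence})--(\ref{eqt:C-recurrence}), with the pivot remaining minimum-degree after moving into $S$) into the worst-case branching vector $(1,2,\dots,k+1)$, and identifies $\gamma_k$ as the largest positive root of $x^{k+2}-2x^{k+1}+1=0$ (citing Fomin--Kratsch where you derive the characteristic equation directly by multiplying through by $x-1$). The only slips are cosmetic: the forced discarding of the pivot's remaining non-neighbours once its quota is exhausted is realised by the candidate-set update in line~21 of Algorithm~\ref{alg:enumeration}, not by the upper bound of Algorithm~\ref{alg:hybrid-prunning}, and the paper charges $O(n^2)$ per recursion (dominated by the maximality test in line~14) rather than your $O(n)$ per node, which is optimistic for that test but immaterial to the stated bound.
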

	\vspace*{-0.2cm}
	\begin{proof}
		The worst-case time complexity of Algorithm~\ref{alg:enumeration} is mainly dominated by the size of the recursive enumeration tree. Thus, in the following, we mainly focus on the bound of the number of enumeration branches in Algorithm~\ref{alg:enumeration}. Let $T(z, H)$ be the number of branches of $\bran(S, C, X, q,k)$, where $z$ is the size of $|C|$, and $H$ is the subgraph of $G$ induced by $S \cup C$. Denote by $H_i$ ($i\ge1$) the  induced subgraph of $H$ that a pivot vertex in $\bran(S\cup\{v_1, ..., v_{i-1}\}, C \setminus \{v_1, ..., v_{i-1}\}, X, q,k)$ is removed, where $\{v_1, ..., v_{i-1}\}= \emptyset$ if $i=1$. Then, we have the following recursive inequation:
		\vspace*{-0.1cm}
		\begin{equation}
			\vspace*{-0.05cm}
			\label{eqt:basic-recurrence}
			T(z, H) \le T(z-1, H)+T(z-1, H_1)
		\end{equation}

		Clearly, the branching factor of Eq.~(\ref{eqt:basic-recurrence}) is 2, which means that the time cost of Algorithm~\ref{alg:enumeration} is bounded by $O(P(n)2^n)$, where $P(n)$ is the worst-case running time of each branch in Algorithm~\ref{alg:enumeration}. However, a tighter bound can be obtained through the following analysis.
		
		\begin{enumerate}[leftmargin=*]
			\item[1.] If $C=\emptyset$ or $d_v(S\cup C) \ge |S\cup C|-k$, it is easy to derive that $T(z, H)$ is bounded by $1$. We then consider the other situations.
			\item[2.] If $v\in S$, a vertex $u$ from $\overline{N}_v(C)$ is chosen for branching. Interestingly, we observe that a pivot vertex $v$ in $\bran(S, C, X, q,k)$ is still the pivot vertex in the sub-branch $\bran(S\cup\{u\}, C\setminus\{u\}, X, q,k)$ in worst-case. Then, Eq.~(\ref{eqt:basic-recurrence}) can be further revised with the following recursive inequation:
			\vspace*{-0.1cm}
			\begin{equation}
				\vspace*{-0.1cm}
				\label{eqt:two-recurrence}
				T(z, H) \le T(z-1, H_1)+T(z-2, H_2) +T(z-2, H)
			\end{equation}
			We can perform similar substitutions on the sub-branches of $\bran(S\cup\{u\}, C\setminus\{u\}, X, q,k)$, until the final branch $\bran(S \cup P, C \setminus P, X, q,k)$ is obtained, where $|P|+\overline{d}_v(S)=k$.  Let $\overline{d}$ be the number of non-neighbors of $v$ in $C$. The maximum size of the candidate set of $S \cup P$ is $z-\overline{d}$, since only the neighbors of $v$ are left in this recursion. Finally, we have:
			\vspace*{-0.2cm}
			\begin{equation}
				\vspace*{-0.2cm}
				\label{eqt:S-recurrence}
				T(z, H) \le \sum_{i=1}^{p} T(z-i, H_i)+T(z-\overline{d}, H')
			\end{equation}
			where $p\le k-1$ and $\overline{d} > p$, and $H'$ the subgraph of $H$ induced by $S \cup C \setminus \overline{N}_v(C)$.
			
			\item[3.] If $v \notin S$, the vertex $v$ can be immediately used for branching. According to Eq.~(\ref{eqt:basic-recurrence}), the vertex $v$ can be the pivot vertex in $\bran(S\cup\{v\}, C\setminus\{v\}, X, q,k)$. Then, combining with Eq.~(\ref{eqt:S-recurrence}), we have:
			\vspace*{-0.2cm}
			\begin{equation}
				\vspace*{-0.2cm}
				\label{eqt:C-recurrence}
				T(z, H) \le \sum_{i=1}^{p+1} T(z-i, H_i)+T(z-\overline{d}-1, H')
			\end{equation}
			where $p\le k-1$ and $\overline{d} > p$.
		\end{enumerate}
		
		Note that in the worst case, $p$ and $\overline{d}$ are $k-1$ and $k$, respectively. Based on the theoretical result in \cite{10FominK}, the branching factor $r_k$ of $T(z, H)$ is the largest real root of function $x^{k+2}-2x^{k+1}+1=0$. As a result, the total time cost of Algorithm~\ref{alg:enumeration} can be bounded by $O(P(n)r_k^n)$. In each recursion, the time cost is dominated by the update operation (line~21) and the maximality checking operation (line~14), which are bounded by $O(n^2)$. Putting it all together, we have the time complexity of Algorithm~\ref{alg:enumeration}.
		
	\end{proof}

	\comment{
		
		The worst-case time complexity of Algorithm~\ref{alg:enumeration} is mainly dominated by the size of the recursive enumeration tree. In the following, we first discuss the bound of the number of enumeration branches in Algorithm~\ref{alg:enumeration}. Let $T(z, H)$ be the number of branches of $\bran(S, C, X, q,k)$, where $z$ is the size of $|C|$, and $H$ is the subgraph of $G$ induced by $S \cup C$. Denote by $H_i$ ($i\ge1$) the  induced subgraph of $H$ that a pivot vertex in $\bran(S\cup\{v_1, ..., v_{i-1}\}, C \setminus \{v_1, ..., v_{i-1}\}, X, q,k)$ is removed, where $\{v_1, ..., v_{i-1}\}= \emptyset$ if $i=1$. Then, we have the following recursive inequation:
		\vspace*{-0.05cm}
		\begin{equation}
			\vspace*{-0.05cm}
			\label{eqt:basic-recurrence}
			T(z, H) \le T(z-1, H)+T(z-1, H_1)
		\end{equation}

		Clearly, the branching factor of Eq.~(\ref{eqt:basic-recurrence}) is 2, which means that the time cost of Algorithm~\ref{alg:enumeration} is bounded by $O(P(n)2^n)$, where $P(n)$ is the worst-case running time of each branch in Algorithm~\ref{alg:enumeration}. However, a tighter bound can be obtained through the following analysis.
		
		\begin{enumerate}[leftmargin=*]
			\item[1.] If $C=\emptyset$ or $d_v(S\cup C) \ge |S\cup C|-k$, it is easy to derive that $T(z, H)$ is bounded by $1$. We then consider the other situations.
			\item[2.] If $v\in S$, a vertex $u$ from $\overline{N}_v(C)$ is chosen for branching. Interestingly, we observe that a pivot vertex $v$ in $\bran(S, C, X, q,k)$ is still the pivot vertex in the sub-branch $\bran(S\cup\{u\}, C\setminus\{u\}, X, q,k)$ in worst-case. Then, Eq.~(\ref{eqt:basic-recurrence}) can be further revised with the following recursive inequation:
			\vspace*{-0.1cm}
			\begin{equation}
				\vspace*{-0.1cm}
				\label{eqt:two-recurrence}
				T(z, H) \le T(z-1, H_1)+T(z-2, H_2) +T(z-2, H)
			\end{equation}
			We can perform similar substitutions on the sub-branches of $\bran(S\cup\{u\}, C\setminus\{u\}, X, q,k)$, until the final branch $\bran(S \cup P, C \setminus P, X, q,k)$ is obtained, where $|P|+\overline{d}_v(S)=k$.  Let $\overline{d}$ be the number of non-neighbors of $v$ in $C$. The maximum size of the candidate set of $S \cup P$ is $z-\overline{d}$, since only the neighbors of $v$ are left in this recursion. Finally, we have:
			\vspace*{-0.2cm}
			\begin{equation}
				\vspace*{-0.2cm}
				\label{eqt:S-recurrence}
				T(z, H) \le \sum_{i=1}^{p} T(z-i, H_i)+T(z-\overline{d}, H')
			\end{equation}
			where $p\le k-1$ and $\overline{d} > p$, and $H'$ the subgraph of $H$ induced by $S \cup C \setminus \overline{N}_v(C)$.
			
			\item[3.] If $v \notin S$, the vertex $v$ can be immediately used for branching. According to Eq.~(\ref{eqt:basic-recurrence}), the vertex $v$ can be the pivot vertex in $\bran(S\cup\{v\}, C\setminus\{v\}, X, q,k)$. Then, combining with Eq.~(\ref{eqt:S-recurrence}), we have:
			\vspace*{-0.2cm}
			\begin{equation}
				\vspace*{-0.2cm}
				\label{eqt:C-recurrence}
				T(z, H) \le \sum_{i=1}^{p+1} T(z-i, H_i)+T(z-\overline{d}-1, H')
			\end{equation}
			where $p\le k-1$ and $\overline{d} > p$.
		\end{enumerate}
		
		It is easy to show that in the worst case, $p$ and $\overline{d}$ are $k-1$ and $k$, respectively. Based on the theoretical result in \cite{10FominK}, we can derive that the branching factor $r_k$ of $T(z, H)$ is the largest real root of function $x^{k+2}-2x^{k+1}+1=0$. As a result, the total time cost of Algorithm~\ref{alg:enumeration} can be bounded by $O(P(n)r_k^n)$. In each recursion, the time cost is dominated by the update operation (line~21) and the maximality checking operation (line~14), which are bounded by $O(n^2)$. Putting it all together, we have the following result.
		\begin{theorem}
			\label{lem:time-complexity}
			Given a graph $G$ and two parameters $k$ and $q$, Algorithm~\ref{alg:enumeration} returns all maximal $k$-plexes with  size no less than $q$ in time $O(n^2\gamma_k^n)$, where $\gamma_k < 2$ is the maximum positive real root of $x^{k+2}-2x^{k+1}+1=0$ (e.g. $\gamma_1 = 1.618$, $\gamma_2=1.839$, and $\gamma_3=1.928$).
		\end{theorem}
	}
	
	\vspace*{-0.4cm}
	\stitle{Remark.} 
	As we analyzed before, the time complexity of Algorithm~\ref{alg:enumeration} is dominated by $T(z,H)$, which can be further determined by $p$ and $z-\overline{d}$ based on Eq.~(\ref{eqt:S-recurrence}) and Eq.~(\ref{eqt:C-recurrence}) when $k$ is given. More specifically, the smaller $p$ (or $z-\overline{d}$) yields a smaller branching factor of $T(z, H)$. In our pivot algorithm, the pivot vertex $v$ has a minimum degree in $G(S \cup C)$ and also has the maximum $\overline{d}_v(S)$ among all vertices who have the minimum degree in $G(S \cup C)$, which makes both $z-\overline{d}$ and $p$ as small as possible. 
	As a result, our pivot technique can achieve better practical performance compared to the existing pivot method \cite{ZhouXGXJ20}.

	\begin{table*}[t]
		\small
		\centering
		\caption{Running time of various sequential algorithms on 7 benchmark graphs.} \label{tab:exp-results}
		\vspace*{-0.3cm}
		\begin{tabular}{c|c|c|c|c|c|c|c|c|c|c|c|c|c}
			\toprule
			\mrdata{2} &   \mrk{2}   & \mrq{2} &\mrplex{2}& \mcltime{3}  & \mrdata{2} &  \mrk{2}  &  \mrq{2}  &\mrplex{2} & \mctime{3}    \\ \cline{5-7}\cline{12-14}
			&             &         &&  \fastPlex   & \commuPlex &  \dtwok   &            &             &    &&  \fastPlex   & \commuPlex &  \dtwok   \\ \hline
			\mrDBLP{7} & \mrow{2}{2} &   12    &12544&     0.12     &    2.3     & \bf{0.1}  & \mrSdot{7} & \mrow{3}{2} & 12 &27208777&  \bf{61.95}  &   220.24   &  184.86   \\ 
			&             &   20    &5049&     0.06     &    0.56    & \bf{0.05} &            &             & 20 &11411028&  \bf{25.39}  &   111.27   &  119.81   \\ \cline{2-7}
			& \mrow{2}{3} &   12    &3003588&  \bf{4.83}   &   17.66    &   7.77    &            &             & 30 &453&   \bf{0.2}   &    5.45    &   14.90   \\ \cline{9-14}
			&             &   20    &2141932&  \bf{3.33}   &   11.85    &   6.04    &            & \mrow{3}{3} & 12 &2807943240& \bf{6564.55} &  25721.92  &   19088.85   \\ \cline{2-7}
			& \mrow{3}{4} &   12    &610150817& \bf{727.06}  &  2929.28   &  1760.42  &            &             & 20 &1303148522& \bf{2641.46} &  14903.76  &   12045.24        \\ 
			&             &   20    &492253045&  \bf{576.8}  &   2489.0   &  1507.37  &            &             & 30 &1679468&  \bf{14.78}  &   141.95   &    1854.576       \\ \cline{9-14}
			&             &   30    &12088200&  \bf{21.82}  &   108.76   &  132.82   &            &      4      & 30 &502699966& \bf{1852.7}  &  26017.12  &   INF        \\ \hline
			\mrEml{6}  & \mrow{2}{2} &   12    &412779&  \bf{1.25}   &    4.61    &   9.23    & \mrWiki{6} & \mrow{2}{2} & 12 &2919931&  \bf{14.96}  &   41.11    &  101.09   \\ 
			&             &   20    &0&   \bf{0.1}   &    0.67    &   0.44    &            &             & 20 &52&  \bf{0.26}   &    2.24    &   10.25   \\ \cline{2-7}\cline{9-14}
			& \mrow{2}{3} &   12    &32639016&  \bf{82.24}  &   337.79   &  835.58   &            & \mrow{2}{3} & 12 &458153396& \bf{1948.39} &  6885.94   & 15510.07  \\ 
			&             &   20    &2637&  \bf{0.26}   &    3.24    &   48.14   &            &             & 20 &156727&  \bf{10.95}  &   149.58   &  1646.13  \\ \cline{2-7}\cline{9-14}
			& \mrow{2}{4} &   12    &1940182978& \bf{5345.09} &  31962.78  & 76634.4   &            & \mrow{2}{4} & 20 &46729532& \bf{421.86}  &  30153.31  &    INF    \\ 
			&             &   20    &1707177&  \bf{10.68}  &   307.62   &  3388.93  &            &             & 30 &0&  \bf{0.01}   &    2.24    &   0.05    \\ \hline
			\mrEps{5}  & \mrow{2}{2} &   12    &49823056& \bf{211.86}  &   475.79   &  624.26   & \mrPoc{8}  & \mrow{3}{2} & 12 &7679906&  \bf{48.27}  &  1040.74   &  167.02   \\ 
			&             &   20    &613&  \bf{13.44}  &   56.75    &  158.64   &            &             & 20 &94184&  \bf{7.19}   &   666.88   &   14.14   \\ \cline{2-7}
			& \mrow{2}{3} &   20    &548634119& \bf{1746.34} &  9637.53   &  28800.4  &            &             & 30 &3&     3.76     &   409.24   & \bf{2.78} \\ \cline{9-14}
			&             &   30    &16066&  \bf{3.37}   &   78.64    &  2172.89  &            & \mrow{3}{3} & 12 &520888893& \bf{1442.79} &  6861.47   &  12890.4  \\ \cline{2-7}
			&      4      &   30    &13172906& \bf{133.19}  &  20444.33  &    INF    &            &             & 20 &5911456&  \bf{27.43}  &   820.87   &  847.06   \\ \cline{1-7}
			\mrCai{3}  &      2      &   12    &5336&  \bf{0.04}   &    0.08    &   0.22    &            &             & 30 &5&     4.07     &   443.52   & \bf{3.87} \\ \cline{2-7}\cline{9-14}
			&      3      &   12    &281251&  \bf{0.74}   &    3.05    &   12.52   &            & \mrow{2}{4} & 20 &318035938& \bf{885.44}  &  13090.41  &    INF    \\ \cline{2-7}
			&      4      &   12    &15939883&  \bf{40.5}   &   185.16   &  724.91   &            &             & 30 &4515&  \bf{4.46}   &   489.5    &   82.98   \\ \bottomrule
		\end{tabular}
	\end{table*}
	
	\begin{figure*}[t] \vspace*{-0.5cm}
		\begin{center}
			\begin{tabular}[t]{c}
				\subfigure[{\scriptsize $k=2,q=10$}]{
					\includegraphics[width=0.45\columnwidth, height=2.5cm]{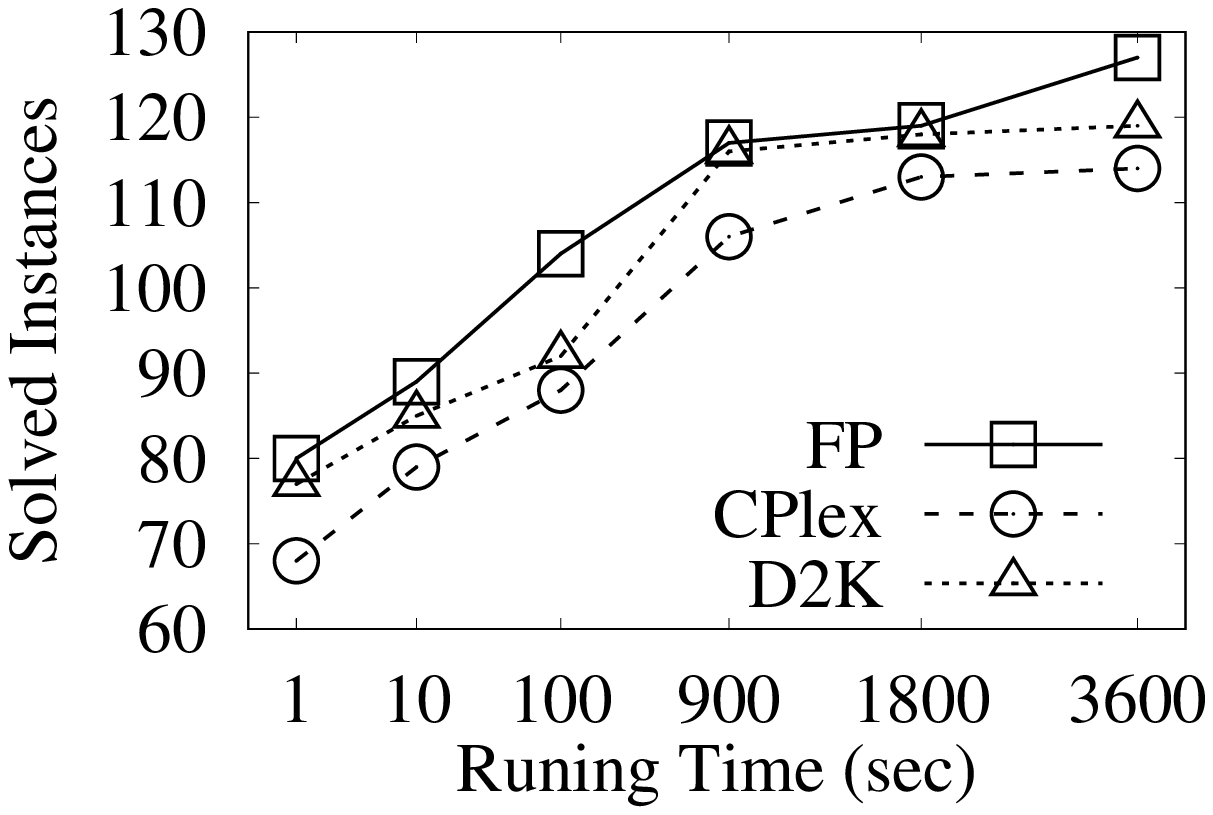}
				}
				\subfigure[{\scriptsize $k=3,q=10$}]{
					\includegraphics[width=0.45\columnwidth, height=2.5cm]{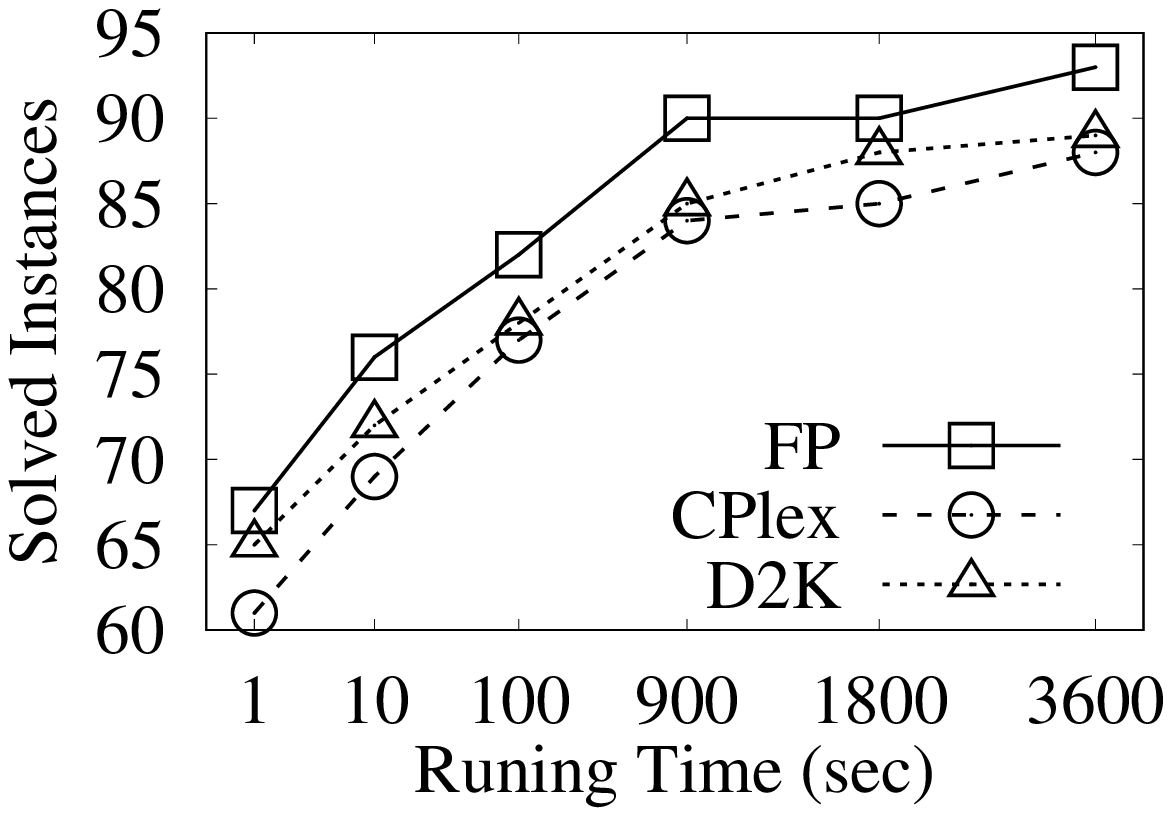}
				}
				\subfigure[{\scriptsize $k=4,q=10$}]{
					\includegraphics[width=0.45\columnwidth, height=2.5cm]{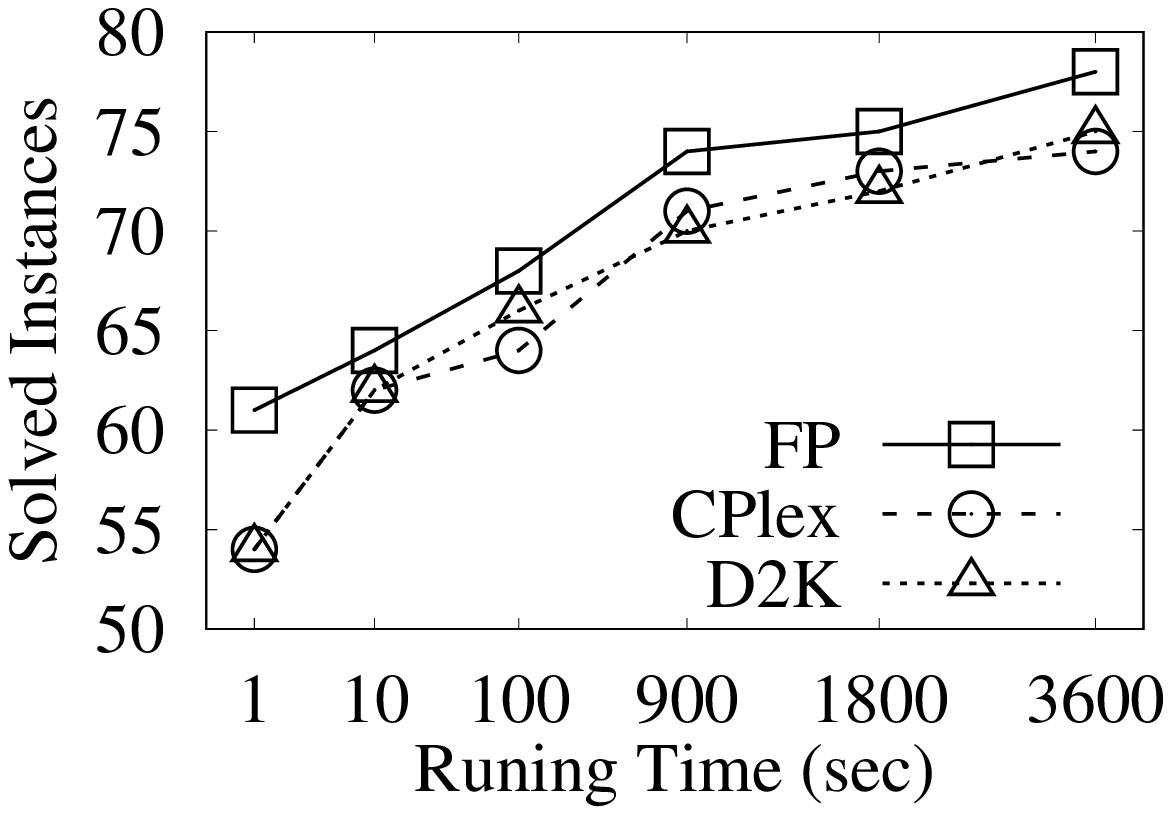}
				}	
				\subfigure[{\scriptsize $k=5,q=10$}]{
					\includegraphics[width=0.45\columnwidth, height=2.5cm]{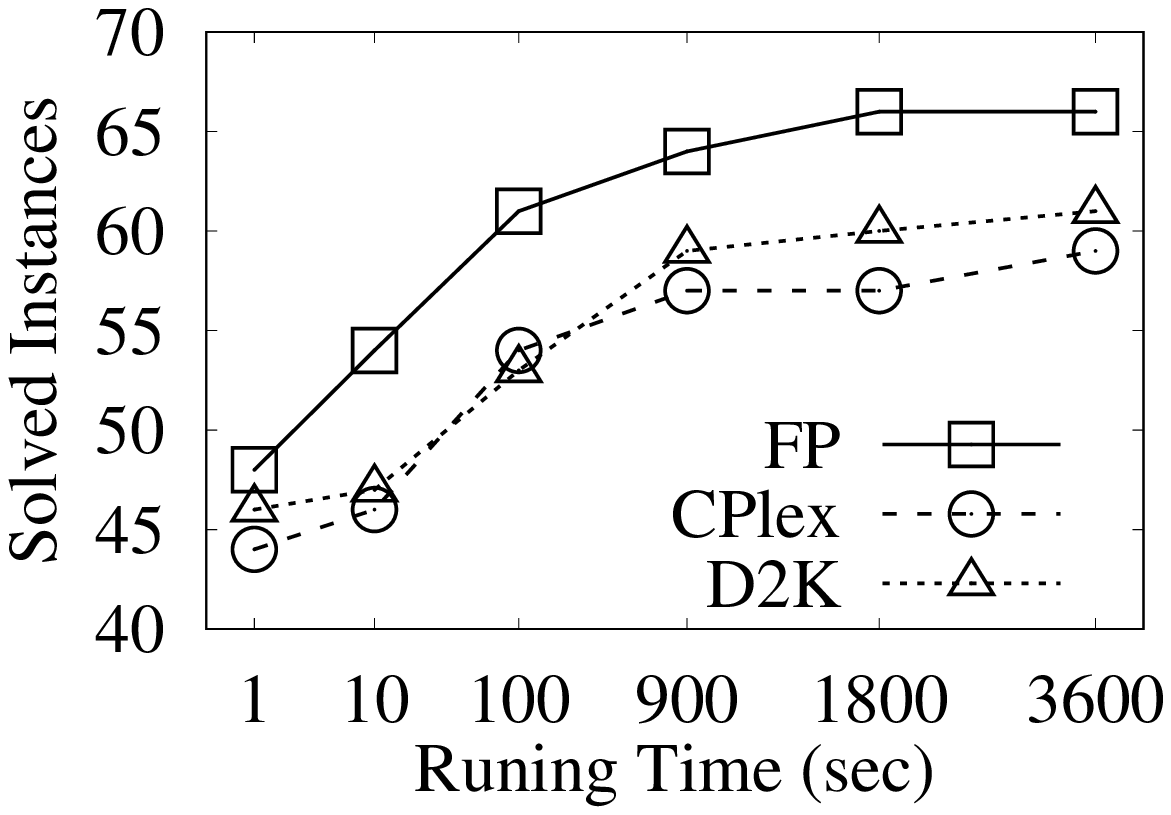}
				}\vspace*{-0.2cm} \\
				\subfigure[{\scriptsize $k=2,q=20$}]{
					\includegraphics[width=0.45\columnwidth, height=2.5cm]{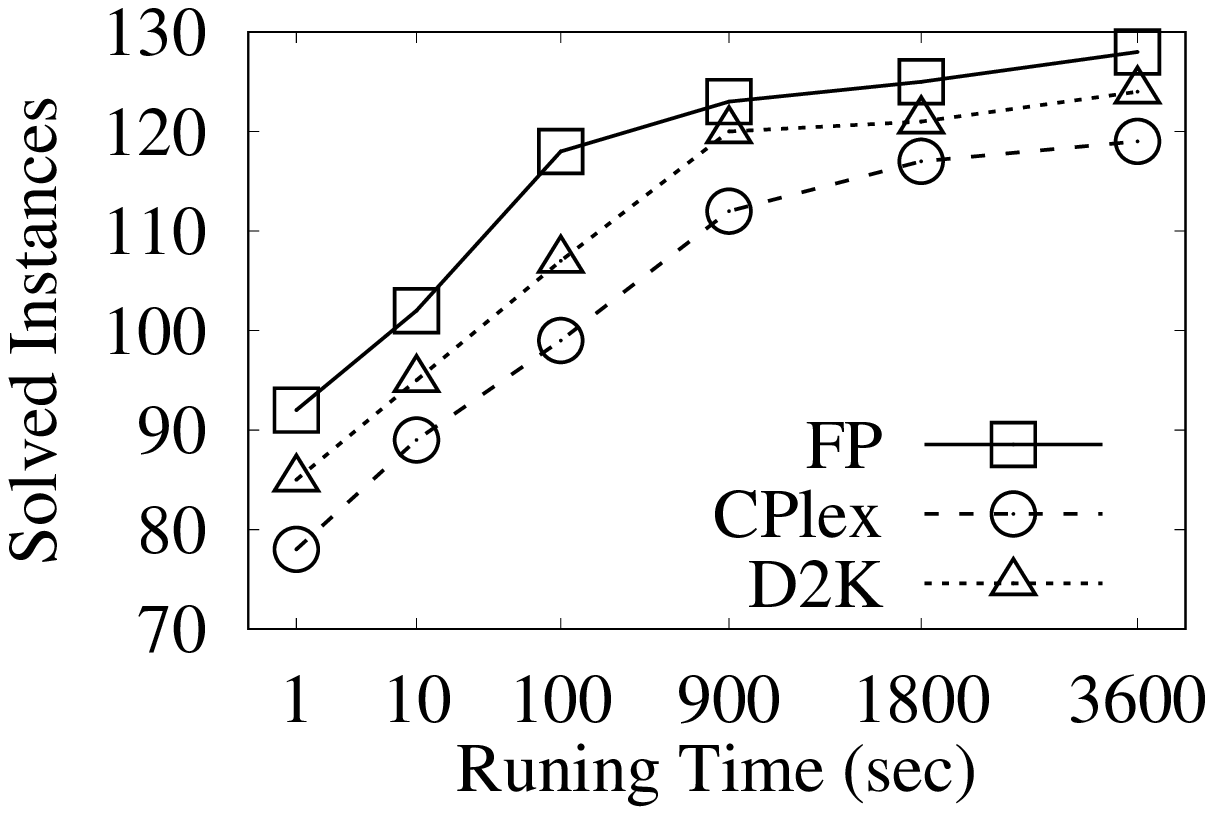}
				}
				\subfigure[{\scriptsize $k=3,q=20$}]{
					\includegraphics[width=0.45\columnwidth, height=2.5cm]{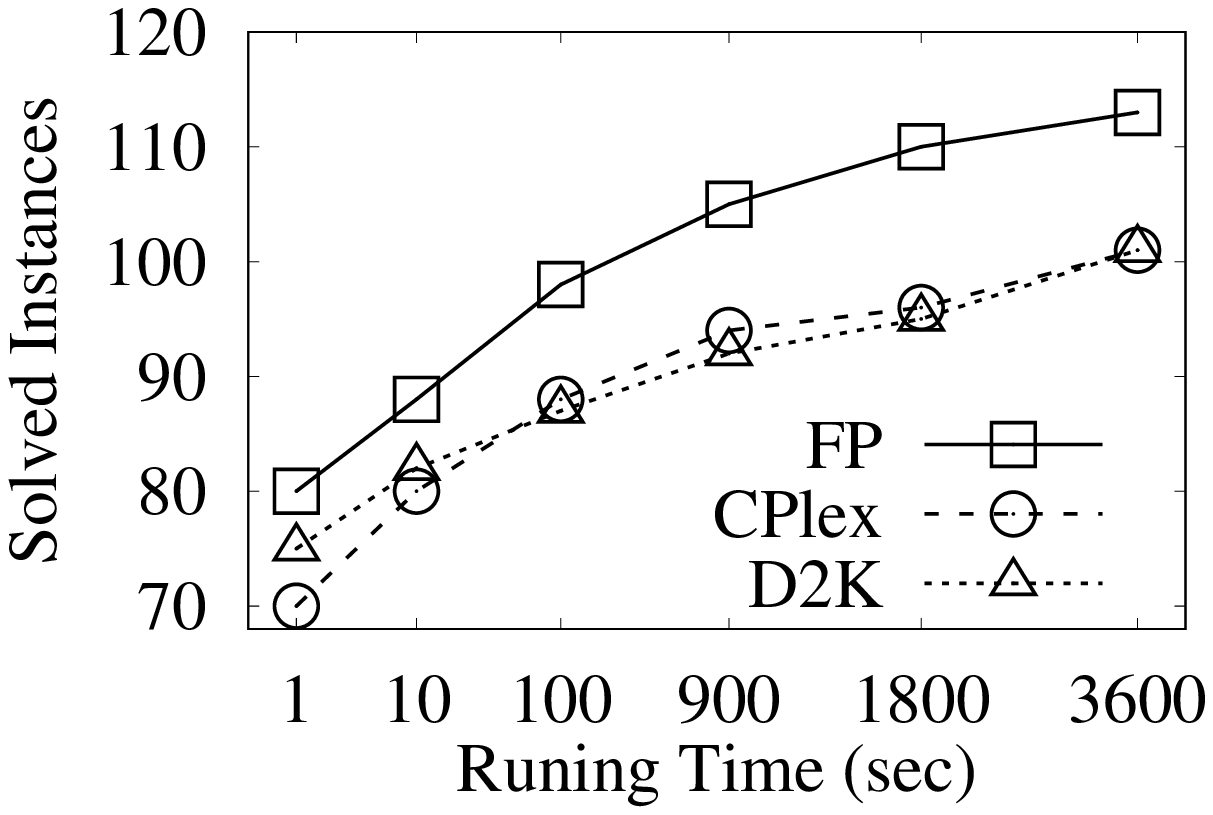}
				}
				\subfigure[{\scriptsize $k=4,q=20$}]{
					\includegraphics[width=0.45\columnwidth, height=2.5cm]{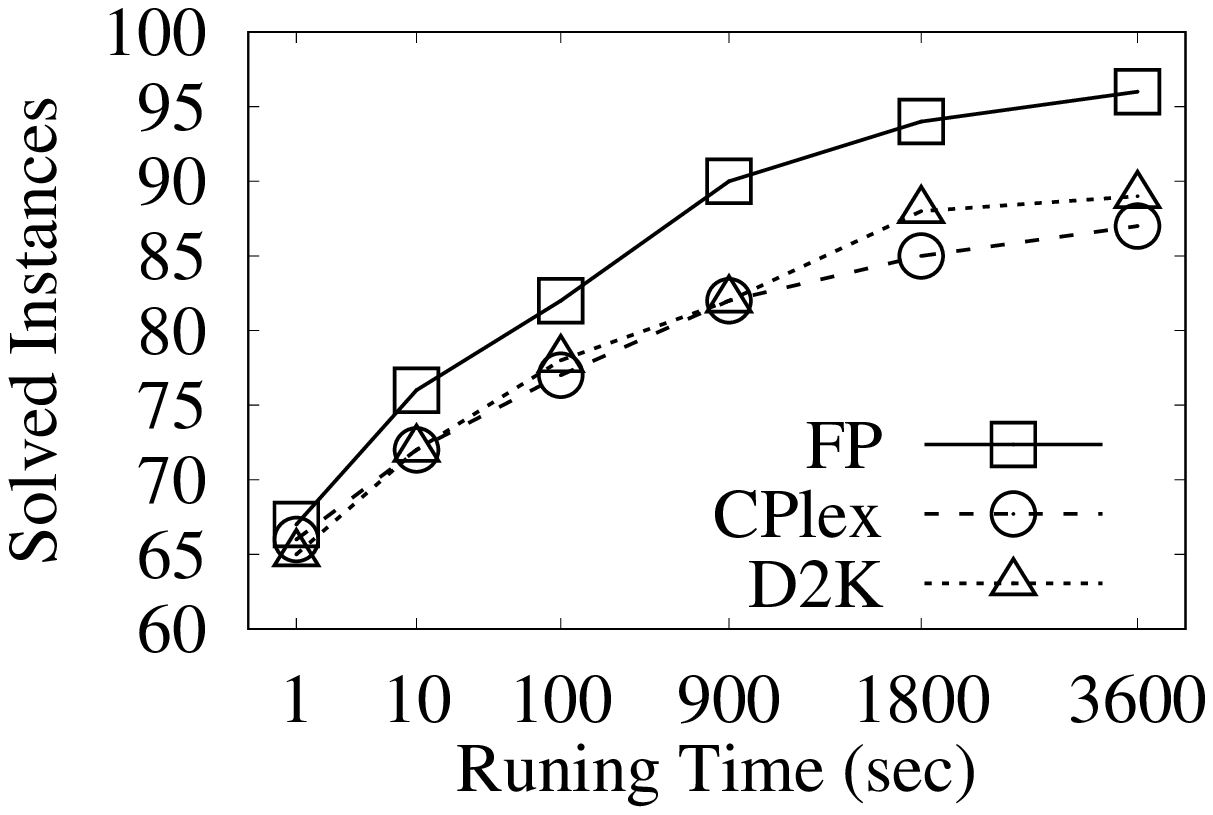}
				}	
				\subfigure[{\scriptsize $k=5,q=20$}]{
					\includegraphics[width=0.45\columnwidth, height=2.5cm]{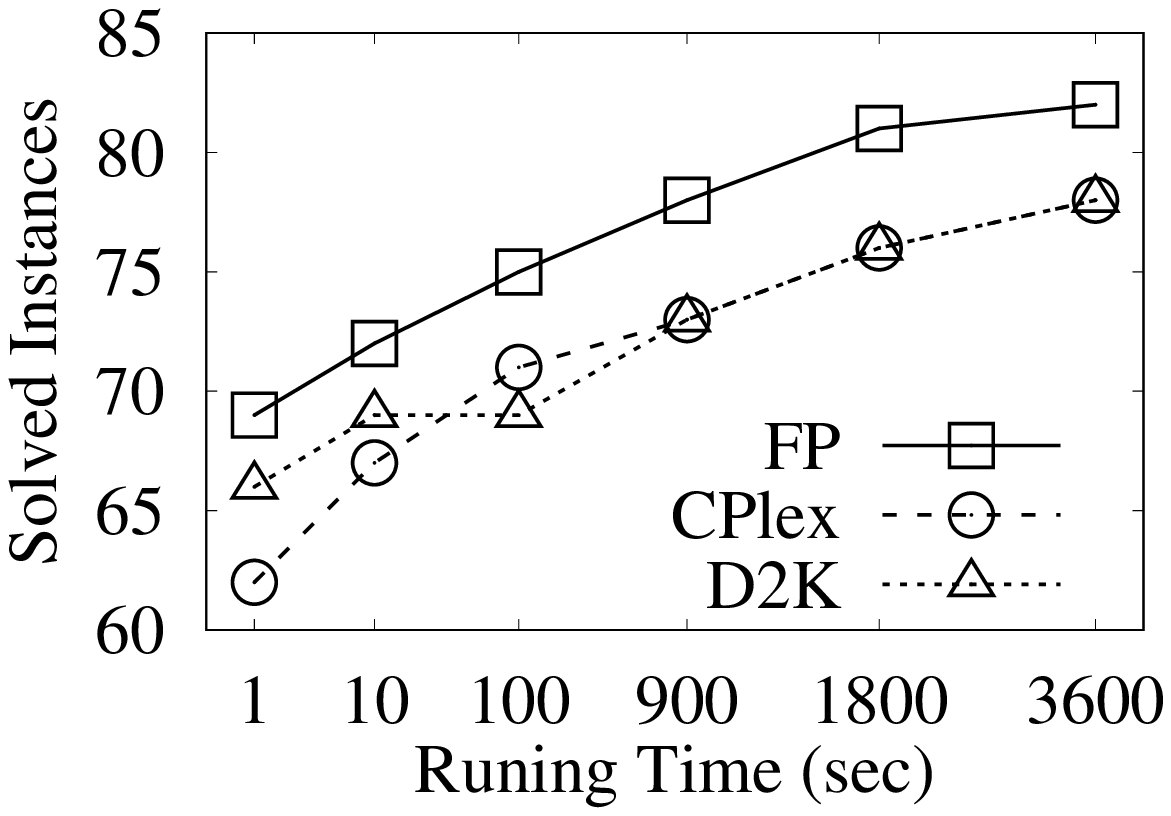}
				}
			\end{tabular}
		\end{center}
		\vspace*{-0.5cm}
		\caption{Number of solved instances on 139 real-world benchmark graphs under different time thresholds (in seconds).}
		\vspace*{-0.3cm}
		\label{fig:exp2:solved-instances}
	\end{figure*}
	
	\newcommand{\mrdatas}[1]{\multirow{#1}{*}{Datasets}}
	\newcommand{\mckl}[2]{\multicolumn{#1}{c|}{$k=#2$}}
	\newcommand{\mck}[2]{\multicolumn{#1}{c}{$k=#2$}}
	\newcommand{\mrbrct}[1]{\multirow{#1}{*}{\makecell[c]{brock200-2\\(200,19752)}}}
	\newcommand{\mrbrcf}[1]{\multirow{#1}{*}{\makecell[c]{brock200-4\\(200,26178)}}}
	\newcommand{\mrfattf}[1]{\multirow{#1}{*}{\makecell[c]{c-fat200-5\\(200,16946)}}}
	\newcommand{\mrfatff}[1]{\multirow{#1}{*}{\makecell[c]{c-fat500-5\\(500,46382)}}}
	\newcommand{\mrfatft}[1]{\multirow{#1}{*}{\makecell[c]{c-fat500-10\\(500,93254)}}}
	\newcommand{\mrjoh}[1]{\multirow{#1}{*}{\makecell[c]{johnson8-4-4\\(70,3710)}}}
	\newcommand{\mrmann}[1]{\multirow{#1}{*}{\makecell[c]{MANN\_a9\\(45,1836)}}}
	\newcommand{\mrhatt}[1]{\multirow{#1}{*}{\makecell[c]{p\_hat300-1\\(300,21866)}}}
	\newcommand{\mrhats}[1]{\multirow{#1}{*}{\makecell[c]{p\_hat700-1\\(700,121998)}}}

	\begin{table*}[h]
		\small
		\centering
		\caption{Running time of various sequential algorithms on DIMACS benchmarks (in second).}
		\label{tab:performance-on-DIMACS}
		\vspace*{-0.3cm}
		\begin{tabular}{c|c|c|c|c|c|c|c|c|c}
			\toprule
			\mrdata{2}  & \mrq{2} & \mckl{4}{2} &  \mck{4}{3}   \\   \cline{3-10}
			&         & \mrplex{1}  &   \fastPlex   & \commuPlex &  \dtwok   &\mrplex{1}&   \fastPlex   & \commuPlex &  \dtwok  \\ \hline
			\mrbrct{2}  &   10    &   2002262   &  \bf{22.24}   &   63.01    &  166.55   &2668680146& \bf{7030.18}  &  21218.15  & 41532.52 \\
			&   20    &      0      &   \bf{1.04}   &    7.93    &   41.92   &0			&  \bf{46.95}   &   787.89   & 10862.64 \\ \hline
			\mrbrcf{2}  &   10    & 5775998682  & \bf{17048.28} &  53907.15  & 36944.82  &	---		&      INF      &    INF     &   INF    \\
			&   20    &      2      &  \bf{215.63}  &  1097.01   &  9028.12  &19227523	& \bf{85958.46} &    INF     &   INF    \\ \hline
			\mrfattf{2} &   10    &    5721     &   \bf{0.13}   &    0.14    &   0.16    &1086435	&   \bf{5.28}   &   22.04    &  12.80   \\
			&   20    &    5721     &   \bf{0.09}   &    0.12    &   0.13    &1086435	&   \bf{3.66}   &   18.15    &  12.60   \\ \hline
			\mrfatff{2} &   10    &    15642    &     0.44      &    0.34    & \bf{0.23} &3576858	&  \bf{24.21}   &   44.97    &  26.94   \\
			&   20    &    15642    &      0.4      &    0.28    & \bf{0.23} &3576858	&  \bf{23.39}   &   41.24    &   25.8   \\ \hline
			\mrfatft{2} &   10    &    31258    &     3.91      &    5.1     & \bf{3.47} &29552680	&  \bf{574.5}   &  2585.94   & 1259.89  \\
			&   20    &    31258    &   \bf{2.61}   &    4.98    &   3.42    &29552680	&  \bf{524.66}  &  2533.09   & 1254.16  \\ \hline
			\mrjoh{2}  &   10    &   16047210  &  \bf{30.39}   &   76.71    &   63.14   &2019800107& \bf{5279.58}  &  17536.71  & 12888.40 \\
			&   20    &     0       &   \bf{0.17}   &    0.88    &   10.68   &0			&   \bf{26.1}   &   238.38   & 3650.27  \\ \hline
			\mrmann{2}  &   10    &   2160546   &   \bf{2.51}   &   10.25    &   6.75    &16619686	&  \bf{44.05}   &   395.74   &  75.39   \\
			&   20    &   1738656   &   \bf{2.05}   &    7.06    &   6.72    &16619686	&   \bf{42.8}   &   397.75   &  74.83   \\ \hline
			\mrhatt{2}  &   10    &     24      &   \bf{0.73}   &    3.3     &   7.97    &382654	&  \bf{52.27}   &   330.47   & 1022.41  \\
			&   20    &     0       &    \bf{0}     &    0.02    &   0.01    &0			&   \bf{0.22}   &    3.27    &  10.98   \\ \hline
			\mrhats{2}  &   10    &   422470    &  \bf{141.89}  &   545.85   &  1253.89  &3475000381& \bf{40311.47} &    INF     &   INF    \\
			&   20    &     0       &  \bf{15.94}   &   225.37   &  505.94   &0			& \bf{1343.98}  &  34751.76  &   INF    \\ \bottomrule
		\end{tabular}
	\end{table*}
	{\color{black}
		
		\subsection{Parallel Enumeration Strategy} \label{subsec:parallelization}
		Here we propose a simple but effective parallelization strategy for our enumeration algorithm when running on the multi-core machines. To achieve this, we need to split the computations into multiple independent sub-tasks. In Algorithm~\ref{alg:enumeration}, we can see that the branches that enumerate $k$-plexes containing the particular set $S$ are completely independent. So, a general parallel computation scheme is to divide the entire task into $n$ sub-tasks, and then we dynamically assign these sub-tasks to each thread to complete the computations (i.e., run lines 2-6 of Algorithm~\ref{alg:enumeration} in parallel). However, the computational workloads of some threads may be very large, leading to an inefficient parallel algorithm. The main reason for this is that the computational cost of each branch $\bran(S, C, X,q,k)$ is very different, which is mainly determined by the size of the set $C$ and the internal structural of the subgraph $G(S \cup C)$. Thus, in this paper, we develop an alternative solution to achieve a better load balancing. The details are as follows.
		
		In our implementation, we first make use of the above-mentioned general scheme to start the parallel computation. Then, during the branching calculations, the algorithm also monitors the work status of each thread. If the algorithm finds that a thread to be idle, some threads would divide the task being executed into two sub-tasks. The first sub-task is the sub-branch that computes the maximal $k$-plexes containing the pivot vertex, and the other sub-task is to compute the maximal $k$-plexes excluding the pivot vertex. Note that these two sub-tasks are also independent of each other. Thus, the newly generated sub-task can be safely assigned to the idle thread. This task division scheme is performed iteratively until all threads finished the computations. As shown in the experiments, such a parallel algorithm can achieve a very good speedup ratio over our sequential algorithm.
	}
	
	\section{Experiments} \label{sec:exp}
	In this section, we conduct extensive experiments to evaluate the efficiency of the proposed algorithm. Since this paper only focuses on improving the performance of maximal $k$-plex enumeration, we did not show the effectiveness testing for the maximal $k$-plex model. Below, we first describe the experimental setup and then report the results.
	
	\subsection{Experimental Setup}
	We implement our algorithm, called \fastPlex, in C++ to enumerate all maximal $k$-plexes, which combines the proposed upper bounding techniques and the pivot re-selection technique. For comparison, we use two state-of-the-art algorithms \dtwok \cite{ConteMSGMV18} and \commuPlex \cite{ZhouXGXJ20} as baselines, since all the other existing algorithms \cite{BerlowitzCK15,WangCHSLPI17,17kddConteFMPT} are less efficient than these two algorithms as shown in \cite{ConteMSGMV18,ZhouXGXJ20}. The C++ codes of  \dtwok and \commuPlex are provided by their authors, thus we use their original implementations in our experiments. All experiments are conducted on a PC with 2.2 GHz AMD CPU and 128GB memory running CentOS operating system.
	
	\comment{	
		\begin{table}[t!]
			\small
			\centering
			\caption{Number of $k$-plexes with varying parameters.} \label{tab:datas}
			\begin{tabular}{c|c|c|c|c}
				\hline
				Dataset & $q$ & $k=2$ & $k=3$ & $k=4$ \\ \hline
				dblp & 12 & 12544 & 3003588 & 610150817 \\ \hline
				dblp & 20 & 5049 & 2141932 & 492253045 \\ \hline
				EmaiEuall & 12 & 412779 & 32639016 & 1940182978 \\ \hline
				EmaiEuall & 20 & 0 & 2637 & 1707177 \\ \hline
				Epinions & 12 & 49823056 & -- & -- \\ \hline
				Epinions & 20 & 3322167 & 548634119 & -- \\ \hline
				Epinions & 30 & 0 & 16066 & 13172906 \\ \hline
				Caida & 12 & 5336 & 281251 & 15939883 \\ \hline
				Slashdot & 12 & 27208777 & 2807943240 & -- \\ \hline
				Slashdot & 20 & 11411028 & 1303148522 & -- \\ \hline
				Slashdot & 30 & 453 & 1679468 & 502699966 \\ \hline
				WikiVote & 12 & 2919931 & 458153396 & 9773156 \\ \hline
				WikiVote & 20 & 52 & 156727 & 46729532 \\ \hline
				Pokec & 12 & 7679906 & 520888893 & -- \\ \hline
				Pokec & 20 & 94184 & 5911456 & 318035938 \\ \hline
				Pokec & 30 & 3 & 5 & 4515 \\ \hline
			\end{tabular}
		\end{table}	
	}
	{\color{black}
		\stitle{Datasets.} In the experiments, we use three sets of graphs to evaluate the efficiency of the proposed algorithms. The first set of graphs is the real-world massive graphs containing 139 undirected and simple graphs collected from the Network Repository \cite{NetworkRepository}. The datasets can be downloaded from (\url{http://lcs.ios.ac.cn/~caisw/graphs.html}), and are widely used for evaluating the performance of $k$-plex search algorithms \cite{18ijcaimaxkplex,20AAAImaxkplex}. The second set of graphs is the DIMACS graphs (\url{http://archive.dimacs.rutgers.edu/pub/challenge/}), which are the well-known benchmark graphs for the test of maximal clique enumeration. The last set of graphs is the large real-world graphs, which are detailed in Table~\ref{tab:exp-parallel-large-graphs}.
	}
	
	\comment{
		\stitle{Datasets.} In the experiments, we use 139 undirected and simple real-world graphs in the Network Repository \cite{NetworkRepository} which include biological networks, social networks, technological networks, and so on. The datasets can be downloaded from (\url{http://lcs.ios.ac.cn/~caisw/graphs.html}). This set of benchmark graphs is widely used for evaluating the performance of $k$-plex search algorithms \cite{18ijcaimaxkplex,20AAAImaxkplex}.
	}
	
	{\color{black}
		\stitle{Parameters.} In all algorithms, there are two parameters: $k$ and the size constraint $q$. In the experiments, we select the parameters $k$ and $q$ from the interval $[2,5]$ and the $[10,30]$, respectively, as used in \cite{ZhouXGXJ20} for small or middle-size graphs. For large graphs, we adaptively set $q$ to find relatively-large $k$-plexes.
	}
	
	\subsection{Experimental Results} 
	\stitle{Exp-1: Efficiency of different sequential algorithms on 7 benchmark graphs.}
	\comment{First,} Here we compare the efficiency of different algorithms using 7 datasets selected from 139 benchmark graphs, because most of these 7 selected datasets have also been used as the benchmark datasets to evaluate different $k$-plex enumeration algorithms in \cite{ConteMSGMV18,ZhouXGXJ20}. Table~\ref{tab:exp-results} shows the running time of each algorithm with varying $k$ and $q$, where $k=2, 3, 4$ and $q=12, 20, 30$. If the algorithm can not terminate within 24 hours, we simply set its running time to ``INF''. From Table~\ref{tab:exp-results} , we can observe that our algorithm consistently outperforms \commuPlex on all datasets. In addition to a few results that are easy to be obtained by all algorithms, our algorithm is also much faster than \dtwok. In general, our algorithm can achieve $2\times$ to $100\times$ speedup over the state-of-the-art algorithms on most benchmark graphs. Moreover, the speedup of \fastPlex increases dramatically with the increase of $k$. This is because the proposed upper-bounding technique is the very effective in pruning unnecessary branches during the enumeration procedure. For instance, when $k=3$ and $q=30$, the speedups of \fastPlex over \commuPlex and \dtwok are $23\times$ and $644\times$ respectively on \socEps. On the same dataset, when $k=4$ and $q=30$, the speedup of \fastPlex over \commuPlex increases to $153\times$, and \dtwok even cannot finish the computation within 24 hours. These results demonstrate the high efficiency of the proposed algorithm.
	
	
	\stitle{Exp-2: Efficiency of different sequential algorithms on real-world graphs.}
	\comment{Second,} We test the number of solved instances of each algorithm on 139 real-world benchmark graphs to further compare the performance of different algorithms. Fig.~\ref{fig:exp2:solved-instances} shows the experimental results under different time thresholds with varying $k$ and $q$. As can be seen, \fastPlex solves the most number of instances among all algorithms with all parameter settings. When comparing with \commuPlex and \dtwok, we observe that \dtwok is usually superior to \commuPlex when $k \le 3$, because \dtwok is tailored for processing sparse real-world graphs. This result is consistent with the result shown in Table~\ref{tab:exp-results}. However, we can see that both \commuPlex and \dtwok still cannot keep up with our algorithm. This result further confirms that the proposed algorithm is very efficient to process real-world graphs. In addition, when computing large maximal $k$-plexes ($q=20$), the gap in the number of solved instances between \fastPlex and the state-of-the-art algorithms becomes very large on most parameter settings. For example, when $q=20$ and $k=4$, \fastPlex solved 90 instances with a time limitation of 900 seconds, while both \commuPlex and \dtwok solved 82 instances. The reason behind it is that the proposed upper-bounding techniques can reduce a large number of unnecessary computations and the pivot re-selection technique can further reduce the size of the candidate set.
	
	\stitle{Exp-3: Efficiency of different sequential algorithms on DIMACS graphs.}
	Here we also evaluate the performance of various algorithms on DIMACS graphs.
	Table~\ref{tab:performance-on-DIMACS} shows the experimental results of \fastPlex, \commuPlex and \dtwok on 9 DIMACS benchmark graphs with $k=2, 3$ and $q=10,20$. From Table~\ref{tab:performance-on-DIMACS}, it is easy to see that our algorithm is much faster than \commuPlex and \dtwok, except for a few results that are easy to be obtained by all algorithms. With the increase of $k$ or $q$, the speedup of our algorithm over the state-of-the-art algorithms increases accordingly. For example, when $k=2$ and $q=20$, the speedups of \fastPlex over \commuPlex and \dtwok on brock200-2 are $7.6\times$ and $40\times$ respectively, while when $k=3$ and $q=20$, the speedup of \fastPlex over \commuPlex and \dtwok on the same dataset are $16.7\times$ and $230\times$ respectively. These results further demonstrate the high efficiency of the proposed algorithm.
	
	\newcommand{\mrEmls}[1]{\multirow{#1}{*}{EmEuAll}}
	\newcommand{\mrEpss}[1]{\multirow{#1}{*}{Epinions}}
	\newcommand{\mrSdots}[1]{\multirow{#1}{*}{Slashdot}}
	\newcommand{\mrWikis}[1]{\multirow{#1}{*}{WikiVote}}
	\newcommand{\mrPocs}[1]{\multirow{#1}{*}{Pokec}}
	
	\begin{table}[t!]
		\small
		\centering
		\caption{Running time of \FPBsc and \fastPlex (in seconds).} \label{tab:exp-ub-testing}
		\vspace*{-0.3cm}
		\begin{tabular}{c|c|c|c|c|c}
			\toprule
			\mrdatas{2}   & \mrq{2} & \mckl{2}{3} &  \mck{2}{4}   \\ \cline{3-6}
			&         &   \FPBsc    &  \fastPlex   & \FPBsc  &  \fastPlex   \\ \hline
			\mrEmls{2}   &   12    &    109.4    &  \bf{82.24}  & 12219.2 & \bf{5345.09} \\ 
			&   20    &    0.49     &  \bf{0.26}   &  50.51  &  \bf{10.68}  \\ \hline
			\mrEpss{3}   &   12    &   49621.4   & \bf{42414.6} &   INF   &     INF      \\ 
			&   20    &   2257.03   & \bf{1746.34} &   INF   &     INF      \\ 
			&   30    &    9.17     &  \bf{3.37}   & 1892.27 & \bf{133.19}  \\ \hline
			\mrWikis{2}   &   12    &   3060.24   & \bf{1948.39} &   INF   &     INF      \\ 
			&   20    &    31.48    &  \bf{10.95}  & 6315.26 & \bf{421.86}  \\ \hline
			\mrPocs{3}   &   12    &   2069.68   & \bf{1442.79} &   INF   &     INF      \\ 
			&   20    &    36.62    &  \bf{27.43}  & 2455.05 & \bf{885.44}  \\ 
			&   30    &    4.45     &  \bf{4.07}   &  5.23   &  \bf{4.46}   \\ \bottomrule
		\end{tabular}
	\end{table}
	
	\newcommand{\mrnn}[1]{\multirow{#1}{*}{$n$}}
	\newcommand{\mrmn}[1]{\multirow{#1}{*}{$m$}}
	\newcommand{\mrdn}[1]{\multirow{#1}{*}{$d$}}
	\newcommand{\mrde}[1]{\multirow{#1}{*}{$\delta$}}
	\begin{table*}
		\small
		\centering
		\caption{Running time of parallel algorithms on large graphs using 20 threads (in second).}
		\label{tab:exp-parallel-large-graphs}
		\vspace*{-0.3cm}
		\begin{tabular}{c|c|c|c|c|c|c|c|c|c|c|c|c}
			\toprule
			\mrdatas{2}  & \mrnn{2} & \mrmn{2}  & \mrdn{2} & \mrde{2} & \mckl{4}{2} & \mck{4}{3}   \\ \cline{6-13}
			&          &           &          &          &     $q$     & \mrplex{1}  & \fastPlex & \dtwok & $q$ & \mrplex{1} & \fastPlex &  \dtwok  \\ \hline
			hollywood    & 2180759  & 457971264 &  13107  &   1297    &     550     &   3697275  & {\bf 49.61} & 3209.48 &  600   &  57187773     & {\bf 321.05} &  45774.19 \\ \hline
			enwiki-2021   & 6261141  & 300249854 &  232410  &   178    &     50     &   360  & {\bf 71.20} & INF &  50   &  40997 &  {\bf 14911.83}  &   INF     \\ \hline
			soc-orkut   & 2997166  & 212698416 &  27466   &   230    &     40      & 17607659790 & {\bf 10468.78} &  82290.98    & 50  &   6276699468    &   {\bf 17088.40} & INF \\ \hline
			fb-A-anon  & 3097165  & 47334788  &   4915   &    74    &     10      &  351549646  &  {\bf 124.41}  & 257.86  & 20  & 594505927  &   {\bf 142.29} & 3456.51  \\ \bottomrule
		\end{tabular}
	\end{table*}
	
	\stitle{Exp-4: The effect of the proposed upper bound techniques.}
	\comment{Finally,} Here we conduct an ablation experiment to study the effect of the upper-bounding techniques used in our algorithm. Let \FPBsc be the proposed branch-and-bound algorithm without using our upper-bounding techniques (i.e., Lemma~\ref{lem:needed-bound} and Lemma~\ref{lem:vertex-support-condictions}). Table~\ref{tab:exp-ub-testing} depicts the running time of \FPBsc and \fastPlex on four datasets with varying $k$ and $q$. The results on the other datasets are consistent. From Table~\ref{tab:exp-ub-testing}, we can see that the running time of \FPBsc is consistently higher than that of \fastPlex on all datasets. This is because the proposed upper bounds are very effective and easy to compute. Moreover, with the increase of $q$, the speedup of \fastPlex compared to \FPBsc usually increases, indicating that our upper bounds have a stronger pruning performance for a larger $q$. When comparing the results shown in Table~~\ref{tab:exp-results}, we can see that \FPBsc is consistently faster than \commuPlex on all datasets and also faster than \dtwok on most datasets with most parameter settings. This result indicates that the proposed pivot re-selection technique is indeed very effective as analyzed in Section~\ref{subsec:time-complexity}.
	
	\begin{figure}[t!] \vspace*{-0.2cm}
		\begin{center}
			\begin{tabular}[t]{c}\hspace*{-0.2cm}
				\subfigure[{\scriptsize $k=2,q=10$}]{
					\includegraphics[width=0.45\columnwidth, height=2.5cm]{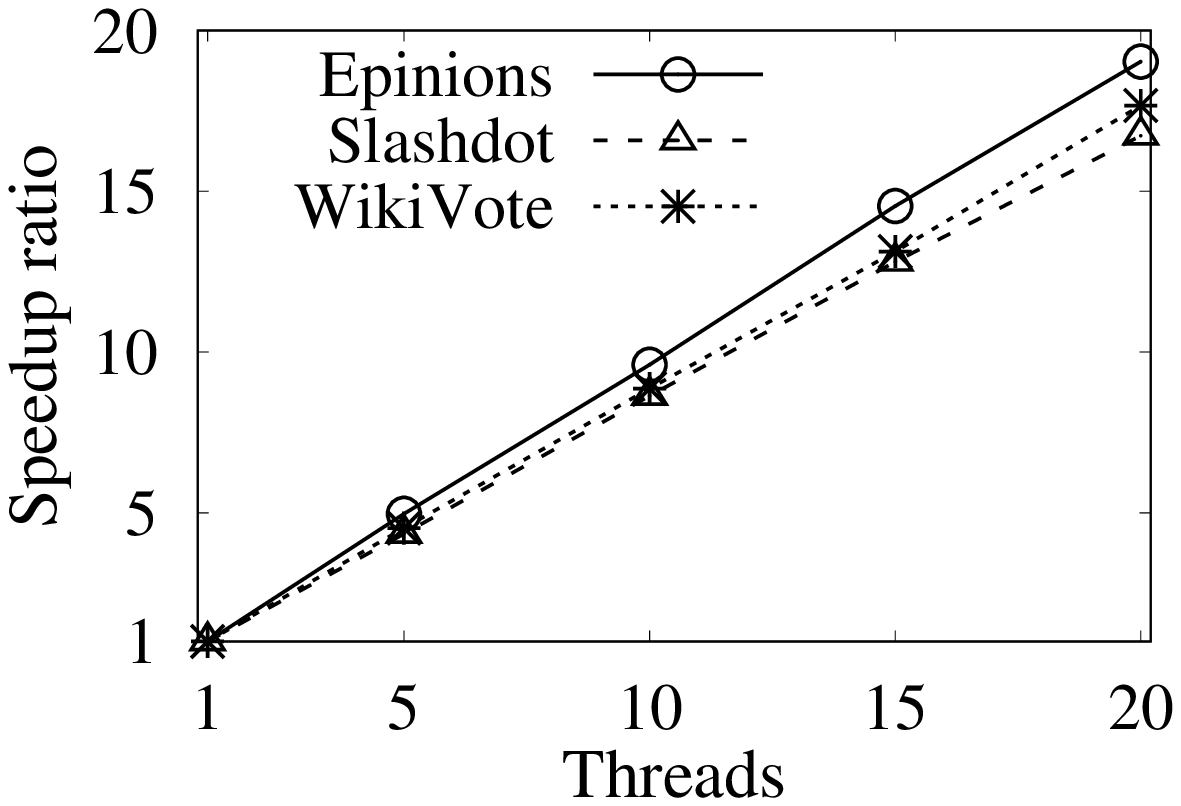}
				}\hspace*{-0.2cm}
				\subfigure[{\scriptsize $k=3,q=10$}]{
					\includegraphics[width=0.45\columnwidth, height=2.5cm]{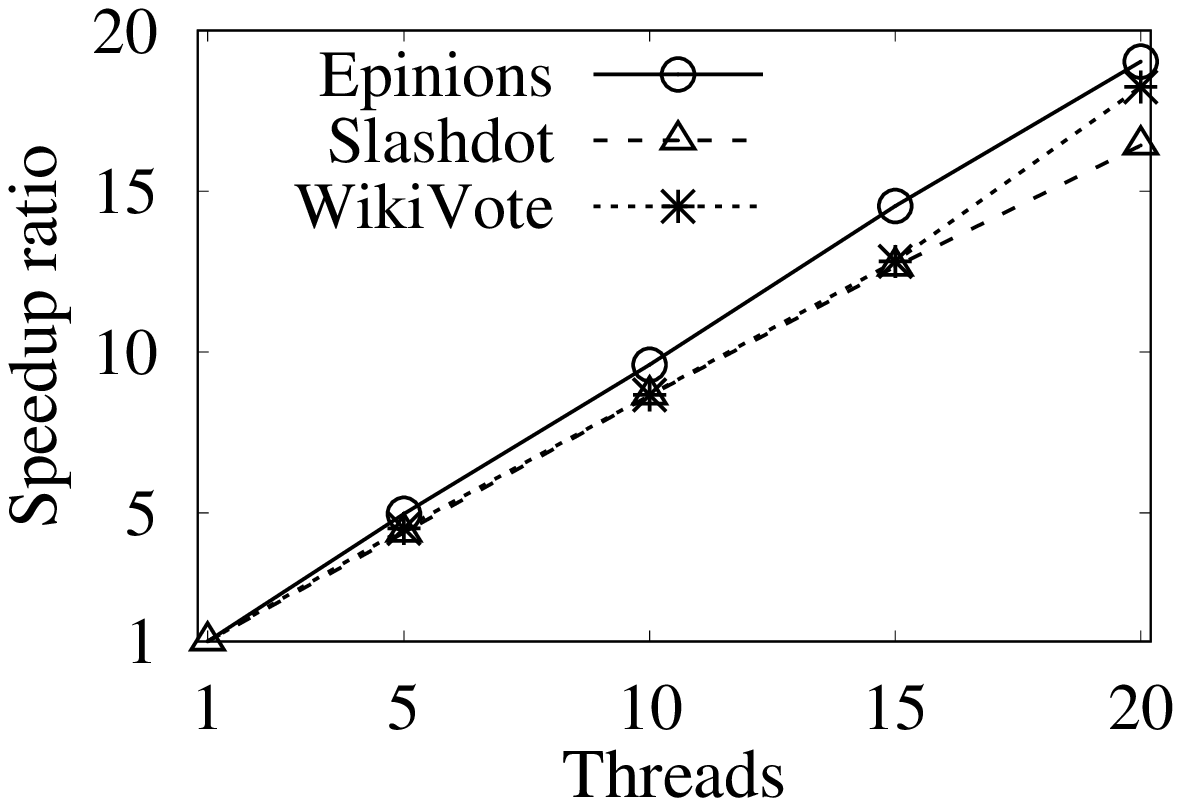}
				}\vspace*{-0.3cm}\\
				\hspace*{-0.2cm}
				\subfigure[{\scriptsize $k=3,q=20$}]{
					\includegraphics[width=0.45\columnwidth, height=2.5cm]{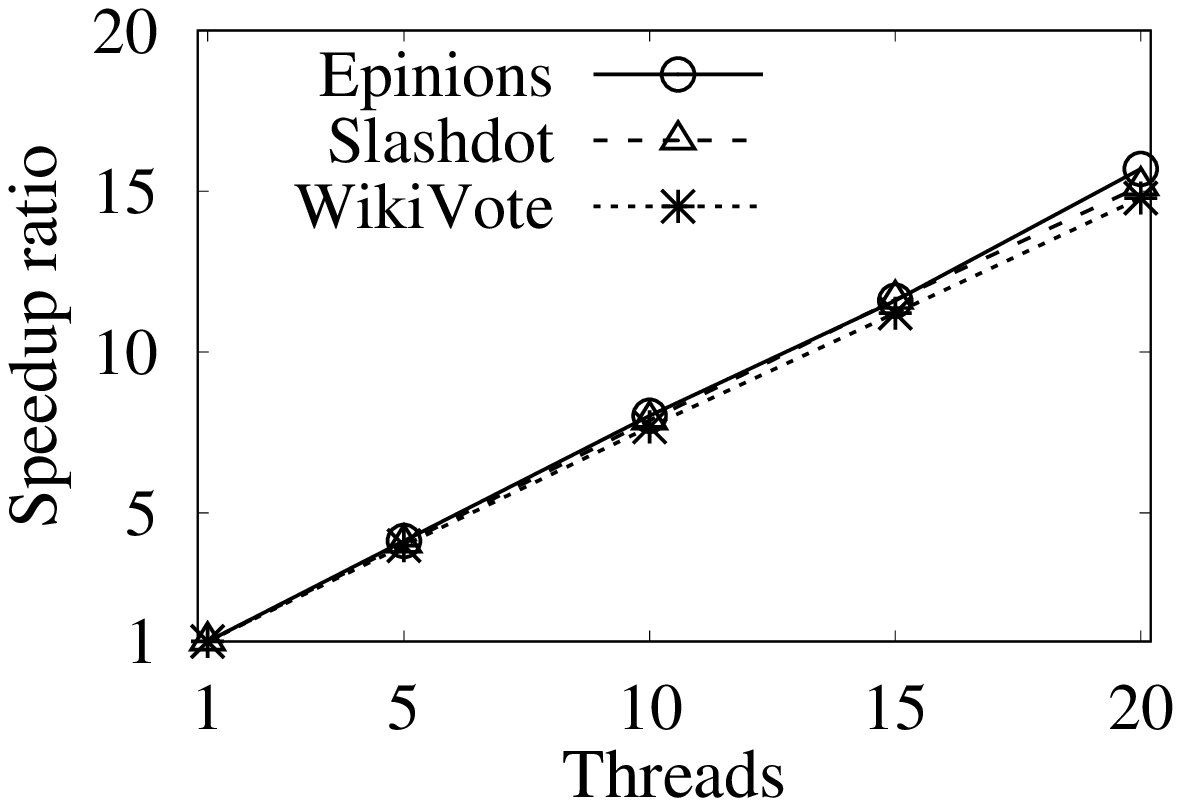}
				}\hspace*{-0.2cm}
				\subfigure[{\scriptsize $k=4,q=20$}]{
					\includegraphics[width=0.45\columnwidth, height=2.5cm]{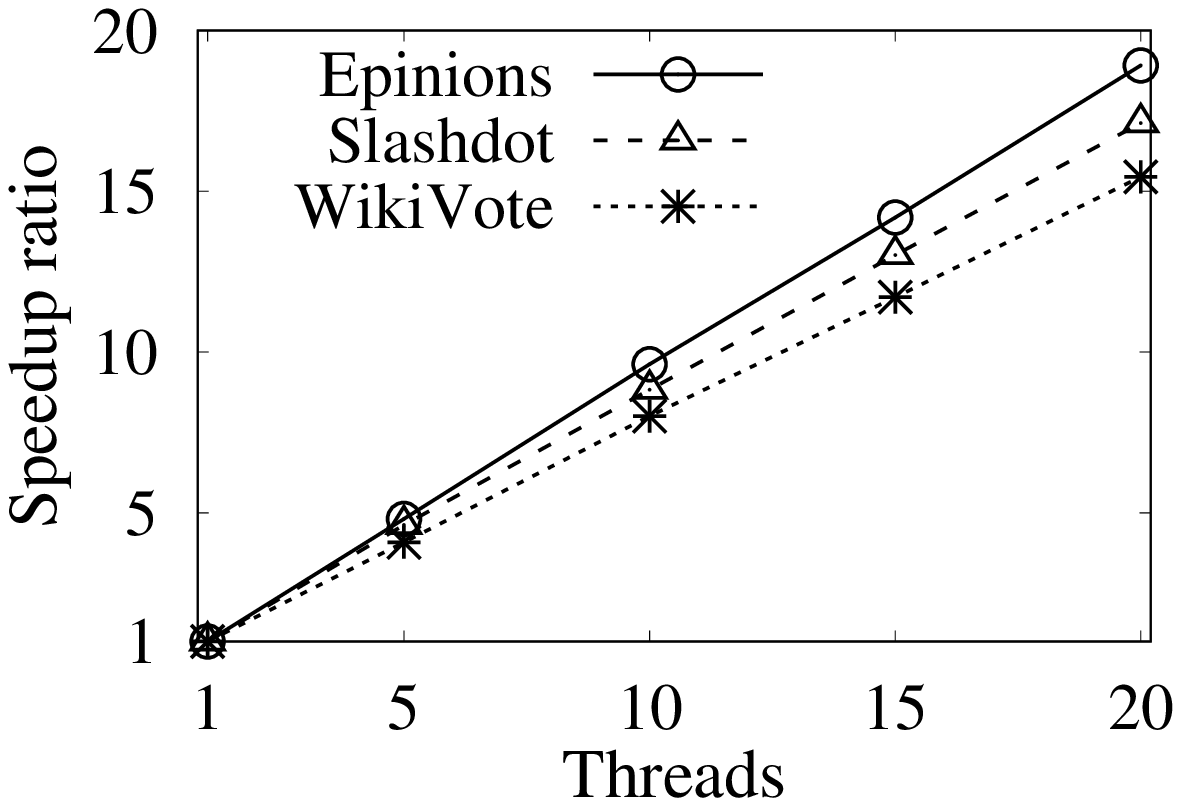}
				}
			\end{tabular}
		\end{center}
		\vspace*{-0.4cm}
		\caption{Speedup ratio of our parallel algorithm over the sequential algorithm.}
		\vspace*{-0.3cm}
		\label{fig:exp5:speedup-ratios}
	\end{figure}
	{\color{black}
		\stitle{Exp-5: The speedup ratio of our parallel algorithm.} In this experiment, we evaluate the speedup ratio of our parallel algorithm on benchmark graphs. Fig.~\ref{fig:exp5:speedup-ratios} shows the
		results on three benchmark datasets, and similar results can also be observed from the other datasets. As can be seen, the speedup ratio of our parallel algorithm is almost linear w.r.t.\ the number of threads used. More specifically, our parallel algorithm running with 20 threads is more than 15 times faster than the corresponding sequential algorithm (using one thread) with most parameter settings. For instance, when $k=2$ and $q=10$, on Epinions, Slashdot, and WikeVote, the parallel algorithm running with 20 threads is 19.03, 16.72, 17.66 times faster than the sequential algorithm, respectively. These results demonstrate the very high parallel performance of our algorithm.
		
		\stitle{Exp-6: Efficiency of the parallel algorithms on large real-world graphs.} Here we compare the performance of two parallel algorithms: our parallel algorithm and the parallel version of \dtwok algorithm, in processing large real-world graphs. Note that \commuPlex does not support the parallelism, thus we exclude such an algorithm in this experiment. The detailed statistics of the large graphs are shown in Table~\ref{tab:exp-parallel-large-graphs}, where columns $n$, $m$, $d$, and $\delta$ denote the number of vertices, edges, maximum degree and the degeneracy number of a graph, respectively. Each graph can be downloaded from the Network Repository \cite{NetworkRepository}  or \url{https://law.di.unimi.it/datasets.php}. Table~\ref{tab:exp-parallel-large-graphs} reports the performance of these two parallel algorithms with 20 threads. We can clearly see that our parallel algorithm substantially outperforms the state-of-the-art parallel algorithm. On some datasets, our parallel algorithm can achieve up to 100 times faster than the parallel version of \dtwok (e.g., on hollywood with $k=3$ and $q=600$), which further indicates the superiority of our parallel algorithm in processing large graphs.
	}
	
	
	{\color{black}
		\vspace*{-0.3cm}
		\section{Related Works} \label{sec:related-work}
		\stitle{Maximal clique enumeration.} Enumerating all maximal cliques from a graph is a fundamental problem in graph analysis. As shown in \cite{moon1965cliques}, the number of maximal cliques could be exponential w.r.t.\ the graph size , i.e., its number is $O(3^{n/3})$ in the worst case. However, the number of cliques in real-world graphs are often much smaller than such a worst-case bound. Many practical algorithms, including the classic Bron-Kerbosch algorithm \cite{cacmBronK73} and its pivot-based variants \cite{TomitaTT06worstcaseclique,13EppsteinLS,tcsNaude16}, work well on real-world graphs. \cite{edbtConte16,SegundoPrallelClique18} developed parallel pivot-based maximal clique enumeration algorithms to handle large datasets. There also exist many output-sensitive maximal clique enumeration algorithms \cite{MakinoU04,algorithmicaChangYQ13,icalpConteGMV16} which can achieve the polynomial-delay time complexity.
		
		\stitle{Maximal relaxed clique enumeration.} Since the constraint of clique is often very strictly for real-world community detection related applications, there exist many relaxed clique models including $k$-plex \cite{cliquerelaxation13}, $s$-defective cliques \cite{cliquerelaxation13}, $r$-clique \cite{edbtBeharC18}, $\gamma$-quasi-clique \cite{PeiJZ05,pkddLiuW08}, and so on. To enumerate all these relaxed cliques, many practical algorithms have been proposed in recent years. For example, \cite{edbtBeharC18} borrowed the solutions of enumerating all maximal clique to enumerate all $r$-cliques. \cite{jcssCohenKS08} developed a general framework to list all subgraphs with hereditary property which can also be used to enumerate $k$-plex and $s$-defective clique, as these two models satisfy the hereditary property. However, such a general framework is inefficient for enumerating a specific subgraph instance (e.g., $k$-plex or $s$-defective clique). Liu and Wong \cite{pkddLiuW08} presented a so-called quick algorithm to enumerate all $\gamma$-quasi-cliques. In this work, we focus on develop efficient, parallel, and scalable solution to enumerate all relatively-large $k$-plexes.
	}
	
	\vspace*{-0.3cm}	
	\section{Conclusion} \label{sec:conclusion}
	In this paper, we study the problem of enumerating all maximal $k$-plexes in a graph with size no less than a given parameter $q$. To solve this problem, we first develop two efficient upper bounds of the $k$-plexes containing a set $S$ of vertices. Then, we propose a branch-and-bound algorithm to enumerate maximal $k$-plexes based on a carefully-designed pivot re-selection technique and the proposed upper-bounding techniques. More importantly, we show that the worst-case time complexity of our algorithm is bounded by $O(n^2\gamma_k^n)$, where $\gamma_k$ is strictly smaller than 2. In addition, we also devise an effective parallelization strategy for the proposed enumeration algorithm. Extensive experimental results on more than 139 real-world graphs demonstrate the efficiency and scalability of the proposed techniques.

	
	
	\balance
	\bibliographystyle{ACM-Reference-Format}
	\bibliography{kplex}
	
\end{document}